\title{Temporal Justification Logic}
\author{Samuel Bucheli
\institute{Zühlke Engineering AG,} 
\institute{ Bogenschützenstrasse 9A,\\ 3008 Bern, Switzerland}
\email{samuel.bucheli.cs@gmail.com}
\and
Meghdad Ghari\thanks{This research was in part supported by a grant from IPM (No. 95030416).}
\institute{School of Mathematics,\\ Institute for Research\\ in Fundamental Sciences (IPM),}
\institute{P.O. Box: 19395-5746, Tehran, Iran}
\email{ghari@ipm.ir}
\and
Thomas Studer\thanks{This work was partially supported by the SNSF project 200021\_165549 \emph{Justifications and non-classical reasoning}.}
\institute{Institut f\"ur Informatik,\\ Universit\"at Bern,} 
\institute{ Neubr\"uckstrasse 10,\\ 3012 Bern, Switzerland}
\email{tstuder@inf.unibe.ch}
}
\newcommand{\Prop}{\textsf{Prop}}
\newcommand{\Formulae}{\textsf{Fml}}
\newcommand{\lfalse}{\bot}
\newcommand{\ltrue}{\top}
\newcommand{\lneg}{\neg}
\newcommand{\propax}{\ensuremath{(\textsf{Taut})}}
\newcommand{\lrule}[2]{\displaystyle{\frac{#1}{#2}}}
\newcommand{\mprule}{\ensuremath{(\textsf{MP})}}
\newcommand{\limplies}{\rightarrow}
\newcommand{\liff}{\leftrightarrow}
\newcommand{\lnext}{\bigcirc}
\newcommand{\lalways}{\Box}
\newcommand{\leventually}{\Diamond}
\newcommand{\luntil}{{\,\mathcal{U}\,}}
\newcommand{\kax}{\ensuremath{\textsf{-k}}}
\newcommand{\nextkax}{\ensuremath{(\lnext\kax)}}
\newcommand{\alwayskax}{\ensuremath{(\lalways\kax)}}
\newcommand{\funax}{\ensuremath{(\textsf{fun})}}
\newcommand{\indax}{\ensuremath{(\textsf{ind})}}
\newcommand{\uoneax}{\ensuremath{(\luntil\textsf{1})}}
\newcommand{\utwoax}{\ensuremath{(\luntil\textsf{2})}}
\newcommand{\necrule}{\ensuremath{\textsf{-nec}}}
\newcommand{\nextnecrule}{\ensuremath{(\lnext\necrule)}}
\newcommand{\alwaysnecrule}{\ensuremath{(\lalways\necrule)}}
\newcommand{\uindrule}{\ensuremath{(\luntil\textsf{-R})}}
\newcommand{\LPLTL}{\textsf{LPLTL}}
\newcommand{\LL}{\textsf{LPLTL}^\star}
\newcommand{\lknows}{\mathsf{K}}
\newcommand{\SFive}{\textsf{S5}}
\newcommand{\CTerms}{\textsf{Const}}
\newcommand{\VTerms}{\textsf{Var}}
\newcommand{\Terms}{\textsf{Tm}}
\newcommand{\jbox}[1]{\left[#1\right]\!}
\newcommand{\tapp}{\cdot}
\newcommand{\tsum}{+}
\newcommand{\tinspect}{!}
\newcommand{\tnext}{\Rrightarrow}
\newcommand{\tprev}{\Lleftarrow}
\newcommand{\talwaysaccess}{\Downarrow}
\newcommand{\tgeneralize}{\Uparrow}
\newcommand{\appax}{\ensuremath{(\textsf{application})}}
\newcommand{\sumax}{\ensuremath{(\textsf{sum})}}
\newcommand{\posintax}{\ensuremath{(\textsf{positive introspection})}}
\newcommand{\refax}{\ensuremath{(\textsf{reflexivity})}}
\newcommand{\constnecrule}{\ensuremath{(\textsf{ax}\necrule)}}
\newcommand{\CS}{\textsf{CS}}
\newcommand{\numberofagents}{h}
\newcommand{\agent}{i}
\newcommand{\alwaysaccessprinciple}{\ensuremath{(\lalways\textsf{-access})}}
\newcommand{\generalizeprinciple}{\ensuremath{(\textsf{generalize})}}
\newcommand{\nextrightshiftprinciple}{\ensuremath{(\lnext\textsf{-access})}}
\newcommand{\nextleftshiftprinciple}{\ensuremath{(\lnext\textsf{-left})}}
\newcommand{\runs}{\mathcal{R}}
\newcommand{\system}{\mathcal{I}}
\newcommand{\evidence}{\mathcal{E}}
\newcommand{\valuation}{\nu}
\newcommand{\entails}{\vDash}
\newcommand{\N}{\mathbb{N}}
\newcommand{\powerset}{\mathcal{P}}
\newcounter{enumsave}
\newcommand{\boxrefax}{\ensuremath{(\textsf{boxed reflexivity})}}
\newcommand{\mixprinciple}{\ensuremath{(\textsf{mix})}}
\newtheorem{theorem}{Theorem}
\newtheorem{lemma}[theorem]{Lemma}
\newtheorem{corollary}[theorem]{Corollary}
\newtheorem{definition}[theorem]{Definition}
\newtheorem*{remark}{Remark}
\newtheorem{question}{Question}
\renewcommand{\phi}{\varphi}
\newcommand{\Sub}{\mathsf{Sub}}
\newcommand{\MCS}{\mathsf{MCS}}
\begin{document}
\maketitle

\begin{abstract}
Justification logics are modal-like logics with the additional capability of recording the reason, or justification, for modalities in syntactic structures, called justification terms. 
Justification logics can be seen as explicit counterparts to modal logics.
The behavior and interaction of agents in distributed system is often modeled using logics of knowledge and time.
In this paper, we sketch some preliminary ideas on how the modal knowledge part of such logics of knowledge and time could be replaced with an appropriate justification logic.
\end{abstract}

 \section{Introduction}
 \label{sect:Introduction}

Justification logics
are epistemic logics that feature explicit reasons for an agent's knowledge and belief.
Originally, Artemov~\cite{Art01BSL} developed the first justification logic, the Logic of Proofs~$\mathsf{LP}$, to provide a  classical provability semantics for intuitionistic logic.
Later, Fitting~\cite{Fit05APAL} introduced epistemic models for justification logic.
This general reading of justification led to a big variety of  epistemic justification logics for many different applications~\cite{Art06TCS,Art08RSL,
ArtKuz14APAL,BucKuzStu11JANCL,Ghari:2014,Gha16JIGPL,KMOS15,KuzStu13LFCS}.
Instead of an implicit statement~$\lknows \phi$, which stands for  \emph{the agent knows~$\phi$}, justification logics include explicit statements of the form~$\jbox{t} \phi$, which mean \emph{$t$~justifies the agent's knowledge of~$\phi$}.

 A common approach to model distributed systems of interacting agents is using logics of knowledge and time, with the interplay between these two modalities leading to interesting properties and questions~\cite{FHMV95,HvdMV04,HZ92,vdMW03,vDvdHR13}.
 While knowledge in such systems has typically been modeled using the modal logic $\SFive$, it is a natural question to ask what happens when we model knowledge in such logics using a justification logic.

This paper offers a first study on combing temporal logic and justification logic. 
We introduce a system $\LPLTL_\CS$  that combines linear time temporal logic $\mathsf{LTL}$ with the justification logic $\mathsf{LP}$. In Sections~\ref{sect:Syntax} and~\ref{sect:Axioms} we present the language and the axioms of $\LPLTL_\CS$, respectively. In Section~\ref{sect:Semantics} we introduce interpreted systems with Fitting models as semantics for temporal justification logic. In Section~\ref{sect:Completeness} we establish soundness and completeness of $\LPLTL_\CS$. In Section~\ref{sect:internalization} we present an extension $\LL_\CS$ of $\LPLTL_\CS$ that  enjoys the internalization property.
In Section~\ref{sect:addidtional} we introduce some additional principles concerning interactions of knowledge, justifications, and time.
In Section~\ref{s:conclusion} we conclude the paper and discuss some open problems.

\noindent
{\bf Acknowledgments.} We would like to thank the anonymous referees for many helpful comments, which helped to improve the paper.

 \section{Language}
 \label{sect:Syntax}

  In the following, let $\numberofagents$ be a fixed number of agents, $\CTerms$ a countable set of justification constants, $\VTerms$ a countable set of justification variables, and $\Prop$ a countable set of atomic propositions.

 The set of justification terms $\Terms$ is defined inductively by
 \[
  t \coloncolonequals c \mid x \mid \; \tinspect t \mid \;  t \tsum t \mid t \tapp t \, ,
 \]
 where $c \in \CTerms$ and $x \in \VTerms$.

 The set of formulas $\Formulae$ is inductively defined by 
 \[
  \phi \coloncolonequals P \mid \lfalse \mid \phi \limplies \phi \mid \lnext \phi \mid  \phi \luntil \phi \mid \jbox{t}_\agent\phi \, , 
 \]
 where $1\leq\agent\leq\numberofagents$, $t \in \Terms$,  and $P \in \Prop$.

We use the following usual abbreviations:
 \begin{align*}
  \lneg \phi &\colonequals \phi \limplies \lfalse &
  \ltrue &\colonequals \lneg \lfalse &\\
  \phi \lor \psi &\colonequals \lneg \phi \limplies \psi &
  \phi \land \psi &\colonequals \lneg (\lneg \phi \lor \lneg \psi) \\
  \phi \liff \psi &\colonequals (\phi \limplies \psi) \land (\psi \limplies \phi) &
  \Diamond \phi &\colonequals \top \luntil \phi \\
  \Box \phi &\colonequals \lneg \Diamond \lneg \phi.
 \end{align*}

Associativity and precedence of connectives, as well as the corresponding omission of brackets, are handled in the usual manner.

Subformulas are defined as usual. The set of subformulas $\Sub(\chi)$ of a formula $\chi$ is inductively given by:
 \begin{align*}
 \Sub(P) &\colonequals \{P\}  & \Sub(\bot) &\colonequals \{\bot\}\\
 \Sub(\phi \to \psi) &\colonequals \{\phi \to \psi\} \cup \Sub(\phi) \cup \Sub(\psi) & \Sub(\lnext \phi) &\colonequals \{\lnext \phi\} \cup \Sub(\phi)\\
  \Sub(\phi \luntil \psi ) &\colonequals \{\phi \luntil \psi\} \cup \Sub(\phi) \cup \Sub(\psi) & \Sub(\jbox{t}_\agent\phi) &\colonequals \{\jbox{t}_\agent\phi\} \cup \Sub(\phi).
 \end{align*}

 \section{Axioms}
 \label{sect:Axioms}

  The axiom system for temporal justification logic consists of three parts, namely propositional logic, temporal logic, and justification logic.

 \subsection*{Propositional Logic}
 For propositional logic, we take
 \begin{enumerate}
  \setcounter{enumi}{\theenumsave}
  \item all propositional tautologies \hfill \propax
  \setcounter{enumsave}{\theenumi}
 \end{enumerate}
 as axioms and the rule modus ponens, as usual:
 \[
   \lrule{\phi \quad \phi \limplies \psi}{\psi}\,\mprule \, .
 \]

 \subsection*{Temporal Logic}
 For the temporal part, we use a system of~\cite{Gabbay,Gol87,Gor99} 
 with axioms
 \begin{enumerate}
  \setcounter{enumi}{\theenumsave}
  \item $\lnext( \phi \limplies \psi) \limplies (\lnext \phi \limplies \lnext \psi)$ \hfill \nextkax
  \item $\lalways( \phi \limplies \psi) \limplies (\lalways \phi \limplies \lalways \psi)$ \hfill \alwayskax
  \item $\lnext \lneg \phi \liff \lneg \lnext \phi$ \hfill \funax
  \item $\lalways (\phi \limplies \lnext \phi) \limplies (\phi \limplies \lalways \phi)$ \hfill \indax
  \item $\phi \luntil \psi \limplies \leventually \psi$ \hfill \uoneax
  \item $\phi \luntil \psi \liff \psi \lor (\phi \land \lnext(\phi \luntil \psi))$ \hfill \utwoax
  \setcounter{enumsave}{\theenumi}
 \end{enumerate}
 and rules
 \[
  \lrule{\phi}{\lnext \phi}\,\nextnecrule \, , \qquad\qquad \lrule{\phi}{\lalways\phi}\,\alwaysnecrule \, .
 \]

 \subsection*{Justification Logic}
 Finally, for the justification logic part, we use a multi-agent version of the Logic of Proofs~\cite{Art01BSL,BucKuzStu11JANCL,Ghari:2014,TYav08TOCS} with axioms
 \begin{enumerate}
  \setcounter{enumi}{\theenumsave}
  \item $\jbox{t}_\agent (\phi \limplies \psi) \limplies (\jbox{s}_\agent \phi \limplies \jbox{t \tapp s}_\agent \psi)$ \hfill \appax
  \item $\jbox{t}_\agent \phi \rightarrow \jbox{t \tsum s}_\agent \phi$ \quad $\jbox{s}_\agent \phi \limplies  \jbox{t \tsum s}_\agent \phi$ \hfill \sumax
  \item $\jbox{t}_\agent \phi \limplies \phi$ \hfill \refax
  \item $\jbox{t}_\agent \phi \limplies \jbox{\tinspect t}_\agent \jbox{t}_\agent \phi$ \hfill \posintax
  \setcounter{enumsave}{\theenumi}
 \end{enumerate}
 and rule
 \[
   \lrule{\jbox{c}_\agent \phi \in \CS}{\jbox{c}_\agent \phi}\, \constnecrule \, ,
 \]
 where the constant specification $\CS$ is a set of formulas $\jbox{c}_\agent \phi$, where $c \in \CTerms$ is a justification constant and $\phi$ is an axiom of propositional logic, temporal logic, or justification logic.

For a given constant specification $\CS$, we use $\LPLTL_\CS$ to denote the Hilbert system given by the axioms and rules for propositional logic, temporal logic, and justification logic as presented above.
As usual, we write $\LPLTL_\CS \vdash \phi$ or simply $\vdash_\CS \phi$ if a formula $\phi$ is derivable in 
$\LPLTL_\CS$. 
Often the constant specification is clear from the context and we will only write $\vdash \phi$ instead of $\vdash_\CS \phi$.

The axiomatization for linear time temporal logic given in~\cite{Gabbay,Gol87,Gor99} includes an axiom 
\[
\Box \phi \to (\phi \land \lnext \Box \phi).
\]
The following lemma shows that we do not need this axiom since in our formalization $\Box$ is a defined operator. 

\begin{lemma}\label{l:mix:1}
We have
\[
\vdash_\CS \Box \phi \to (\phi \land \lnext \Box \phi)
\]
and $\mprule$ is the only rule that is used in this derivation.
\end{lemma}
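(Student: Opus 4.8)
The plan is to unfold the defined operators so that the whole claim reduces to propositional reasoning over the two temporal axioms \utwoax and \funax. Recall that $\lalways\phi$ abbreviates $\lneg(\ltrue \luntil \lneg\phi)$; I would write $U := \ltrue \luntil \lneg\phi$, so that $\lalways\phi$ is literally $\lneg U$ and $\lnext\lalways\phi$ is literally $\lnext\lneg U$. The argument then consists of chaining propositional equivalences, each of which is either an instance of a listed axiom or a propositional tautology, combined using \mprule alone; crucially, no necessitation rule is invoked, which is exactly the side condition the lemma records.

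First I would substitute $\ltrue$ for $\phi$ and $\lneg\phi$ for $\psi$ in the schema \utwoax, obtaining
\[
U \liff \lneg\phi \lor (\ltrue \land \lnext U).
\]
Since $\ltrue \land \lnext U \liff \lnext U$ is a propositional tautology, \propax together with \mprule yields
\[
U \liff \lneg\phi \lor \lnext U.
\]
Taking the propositional contrapositive of this biconditional (again an instance of \propax discharged by \mprule) gives
\[
\lneg U \liff \phi \land \lneg\lnext U.
\]

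Next I would push the negation across the $\lnext$ using \funax. Instantiating \funax with $U$ in place of $\phi$ gives $\lnext\lneg U \liff \lneg\lnext U$, hence propositionally $\lneg\lnext U \liff \lnext\lneg U$. Replacing $\lneg\lnext U$ by $\lnext\lneg U$ in the previous line (a propositional substitution of equivalents, i.e. MP over \propax) yields
\[
\lneg U \liff \phi \land \lnext\lneg U,
\]
which is exactly $\lalways\phi \liff (\phi \land \lnext\lalways\phi)$. The stated implication is the left-to-right half, extracted by one final application of \propax and \mprule.

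The only point requiring care is the bookkeeping: one must verify that, once \utwoax and \funax have been instantiated, every remaining step is genuinely propositional, so that the derivation never appeals to \nextnecrule, \alwaysnecrule, or \constnecrule. In particular, the step that moves $\lneg$ across $\lnext$ must be routed through \funax rather than through any necessitation-style inference. This is the crux of respecting the side condition, but since \funax is stated precisely as the needed biconditional, no real obstacle arises.
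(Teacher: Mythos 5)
Your proof is correct and follows essentially the same route as the paper's: instantiate \utwoax{} with $\top$ and $\lneg\phi$, take the contrapositive, and use \funax{} plus propositional tautologies discharged by \mprule{} to push the negation across $\lnext$, so that no necessitation rule is ever invoked. The only cosmetic difference is that you carry the full biconditional $\lalways\phi \liff (\phi \land \lnext\lalways\phi)$ and extract the implication at the end, whereas the paper works with just the needed direction of \utwoax{} from the start.
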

\begin{proof}
$\Box \phi$ stands for $\lnot ( \top \luntil \lnot\phi)$.
Hence from $\utwoax$ we get
\[
\vdash_\CS  \lnot \phi \lor \lnext (\top \luntil \lnot \phi) \to \top \luntil \lnot \phi.
\]
Taking the contrapositive yields
\[
\vdash_\CS  \lnot ( \top \luntil \lnot \phi) \to \lnot (\lnot \phi \lor \lnext (\top \luntil \lnot \phi)).
\]
By propositional reasoning and $\funax$ we get
\[
\vdash_\CS  \lnot ( \top \luntil \lnot \phi) \to  (\phi \land \lnext \lnot (\top \luntil \lnot \phi)),
\]
which is
\[
\vdash_\CS  \Box \phi \to (\phi \land \lnext \Box \phi).
\qedhere
\]
\end{proof}

\begin{remark}
As usual, we find that the following rule is derivable, see~\cite[Lemma~6]{DBLP:journals/corr/Bucheli15} for a detailed derivation,
\[  
\lrule{\chi \limplies \lneg \psi \land \lnext \chi}{\chi \limplies \lneg(\phi \luntil \psi)}\, . 
\]
From this, we get that the following rule  is also derivable
\[  
\lrule{\chi \limplies \lneg \psi \land \lnext (\chi \lor (\lnot \phi \land \lnot \psi)) }{\chi \limplies \lneg(\phi \luntil \psi)}\,\uindrule \, . 
\]
A proof is given in~\cite[Lemma~4.5]{HvdMV04}. 
\end{remark}


 \section{Semantics}
 \label{sect:Semantics}

In this section we introduce interpreted systems based on  Fitting-models as semantics for temporal justification logic.

\begin{definition}
A \emph{frame} is a tuple $(S, R_1,\ldots,R_\numberofagents)$ where
\begin{enumerate}
\item $S$ is a non-empty set of states;
\item each $R_i \subseteq S \times S$ is a reflexive and transitive relation. 
\end{enumerate}
A \emph{run}~$r$ on a frame is a function from $\N$ to states, i.e., $r: \N \to S$. A \emph{system}~$\runs$ is a non-empty set of runs. 
\end{definition}

\begin{definition}\label{def:evidence function for LPLTL}
Given a frame $(S, R_1,\ldots,R_\numberofagents)$,
a \emph{$\CS$-evidence function for agent~$\agent$} is a function 
\[
\evidence_i: S \times \Terms \to \powerset(\Formulae)
\]
satisfying the following conditions.
For all terms $s,t \in \Terms$, all formulas $\phi,\psi \in \Formulae$, and all $v,w \in S$,
 \begin{enumerate}
  \item 
	$\evidence_\agent(v,t) \subseteq \evidence_\agent(w, t)$, whenever $R_i(v,w)$\hfill (monotonicity)
  \item 
	if $\jbox{c}_\agent \phi \in \CS$, then $\phi \in \evidence_\agent(w,c)$ \hfill (constant specification)
\item 
	if $\phi \limplies \psi \in \evidence_\agent(w,t)$ and $\phi \in \evidence_\agent(w,s)$, then $\psi \in \evidence_\agent(w, t \tapp s)$ \hfill (application)
  \item 
	$\evidence_\agent(w,s) \cup \evidence_\agent(w,t) \subseteq \evidence_\agent(w,s \tsum t)$ \hfill (sum)
  \item 
	if $\phi \in \evidence_\agent(w,t)$, then $\jbox{t}_\agent \phi \in \evidence_\agent(w,\tinspect t)$ \hfill (positive introspection)
 \end{enumerate}
\end{definition}

\begin{definition}
An  \emph{interpreted system for $\CS$}\/ is a tuple 
\[
\system = (\runs, S, R_1,\ldots,R_\numberofagents, \evidence_1\ldots,\evidence_\numberofagents, \valuation)
\] 
where
\begin{enumerate}
\item $(S, R_1,\ldots,R_\numberofagents)$ is a frame;
\item $\runs$ is a system on that frame;
\item $\evidence_\agent$ is a $\CS$-evidence function for agent~$\agent$ for $1 \leq \agent \leq \numberofagents$;
\item $\valuation: S \to \powerset(\Prop)$ is a valuation.
\end{enumerate}
\end{definition}

\begin{definition}
Given an interpreted system 
\[
\system = (\runs, S, R_1,\ldots,R_\numberofagents, \evidence_1,\ldots,\evidence_\numberofagents, \valuation),
\] 
a run $r \in \runs$, and  $n \in \N$, we define truth of a formula $\phi$ in $\system$ at state $r(n)$ inductively by 
 \begin{align*}
  (\system, r, n) &\entails P \text{ iff } P \in \valuation(r(n)) \, ,\\
  (\system, r, n) &\not\entails \lfalse \, ,\\
  (\system, r, n) &\entails \phi \limplies \psi \text{ iff } (\system, r, n) \not\entails \phi \text{ or } (\system, r, n) \entails \psi \, ,\\
  (\system, r, n) &\entails \lnext \phi \text{ iff } (\system, r, n+1) \entails \phi \, ,\\
   (\system, r, n) &\entails \phi \luntil \psi \text{ iff there is some } m \geq 0 \text{ such that } (\system, r, n+m) \entails \psi \\ & \qquad\qquad \text{ and } (\system, r, n+k) \entails \phi \text{ for all } 0 \leq k < m \, ,\\    
  (\system, r, n) &\entails \jbox{t}_\agent \phi \text{ iff }  \phi \in \evidence_\agent(r(n),t)  \text { and } (\system, r^\prime, n^\prime) \entails \phi \\ &\qquad\qquad \text{ for all } r^\prime \in \runs \text{ and } n^\prime \in \N \text{ such that } R_\agent(r(n) , r^\prime(n^\prime)) \, .
 \end{align*}

 As usual, we write $\system \entails \phi$ if
for all $r \in \runs$ and all $ n \in \N$, we have 
$(\system, r, n) \entails \phi$.
Further, we write $\entails_\CS \phi$ if $\system \entails \phi$ for all 
  interpreted systems $\system$ for $\CS$.
\end{definition}
 
\begin{remark}
From the definitions of $\Box$ and $\Diamond$ it follows that:  \begin{align*}
 (\system, r, n) &\entails \Diamond \phi \text{ iff } (\system, r, n+k) \entails \phi \text{ for some } k\geq 0 \, ,\\
 (\system, r, n) &\entails \lalways \phi \text{ iff } (\system, r, n+k) \entails \phi \text{ for all } k\geq 0 \, .\\
 \end{align*}
\end{remark}

\section{Soundness and Completeness}
\label{sect:Completeness}

The soundness proof for $\LPLTL_\CS$ is a straightforward combination of the soundness proofs for temporal logic and justification logic by induction on the derivation.

\begin{theorem}
Let $\CS$ be an arbitrary constant specification.
For each formula $\phi$,
   \[
   \vdash_\CS \phi  \quad\text{implies}\quad \models_\CS \phi.
   \]
\end{theorem}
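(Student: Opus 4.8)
The plan is to argue by induction on the length of a derivation of $\phi$ in $\LPLTL_\CS$, showing that every axiom is valid in every interpreted system for $\CS$ and that every rule preserves validity. Unfolding the definitions, $\models_\CS \phi$ amounts to $(\system, r, n) \entails \phi$ for all interpreted systems $\system$ for $\CS$, all $r \in \runs$, and all $n \in \N$, so it suffices to check each axiom schema pointwise and to check that each rule, applied at arbitrary $(\system, r, n)$, carries valid premises to a valid conclusion.

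For the propositional part, the tautologies hold pointwise because the clauses for $\limplies$ and $\lfalse$ reproduce the classical truth tables, and $\mprule$ preserves truth at each $(\system, r, n)$, hence preserves validity. For the temporal part I would verify the standard $\LTL$ facts directly against the semantics: $\nextkax$ and $\funax$ follow from the fact that $\lnext$ is interpreted by the single successor $n \mapsto n+1$ on $\N$; $\uoneax$ and $\utwoax$ follow by unwinding the definition of $\luntil$ (using the abbreviation $\leventually \psi \colonequals \top \luntil \psi$); and $\alwayskax$ is immediate from the derived clause for $\lalways$ recorded in the preceding remark. The induction axiom $\indax$ is the one temporal case needing a genuine argument: assuming $(\system, r, n) \entails \lalways(\phi \limplies \lnext \phi)$ and $(\system, r, n) \entails \phi$, one shows $(\system, r, n+k) \entails \phi$ for all $k$ by ordinary induction on $k \in \N$, invoking the hypothesis at each shifted point $(\system, r, n+k)$. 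The rules $\nextnecrule$ and $\alwaysnecrule$ preserve validity directly from the shifted truth clauses.

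The justification part is where the two-component semantics of $\jbox{t}_\agent$ — an evidence condition $\phi \in \evidence_\agent(r(n), t)$ together with a modal condition ($\phi$ at all $R_\agent$-accessible points) — must be handled in tandem, and each axiom is matched to a closure condition on $\evidence_\agent$ or to a frame property. Thus $\appax$ uses the application condition on the evidence function together with modus-ponens-style reasoning on the modal component; $\sumax$ uses the sum condition (the modal component being unaffected); $\refax$ uses reflexivity of $R_\agent$, so that the current point is itself among the accessible ones; and $\constnecrule$ is sound precisely because of the constant-specification condition on $\evidence_\agent$ combined with the validity (already established) of the axiom being justified, which guarantees the justified formula holds at every $R_\agent$-accessible point.

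I expect $\posintax$ to be the main obstacle, since it is the only axiom that pushes information through both the evidence layer and the accessibility layer at once. To validate $\jbox{t}_\agent \phi \limplies \jbox{\tinspect t}_\agent \jbox{t}_\agent \phi$, assume $(\system, r, n) \entails \jbox{t}_\agent \phi$, so that $\phi \in \evidence_\agent(r(n), t)$ and $\phi$ holds at every point $R_\agent$-accessible from $r(n)$. The evidence part of the conclusion, $\jbox{t}_\agent \phi \in \evidence_\agent(r(n), \tinspect t)$, follows from the positive-introspection condition on $\evidence_\agent$. For the modal part I must show $(\system, r', n') \entails \jbox{t}_\agent \phi$ whenever $R_\agent(r(n), r'(n'))$: here $\phi \in \evidence_\agent(r'(n'), t)$ follows from \emph{monotonicity} of the evidence function along $R_\agent$, while the requirement that $\phi$ hold at every point accessible from $r'(n')$ follows from \emph{transitivity} of $R_\agent$, which collapses the two accessibility steps into a single step back to a point accessible from $r(n)$. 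It is exactly this joint use of monotonicity and transitivity that makes the case delicate; once the pattern of separating the evidence and modal components is fixed, the remaining cases are routine.
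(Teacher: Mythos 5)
Your proof is correct and follows exactly the route the paper takes: the paper's proof consists of the single remark that soundness is ``a straightforward combination of the soundness proofs for temporal logic and justification logic by induction on the derivation,'' and your proposal is precisely that induction with the details filled in. In particular, your treatment of \posintax{} via the monotonicity condition on $\evidence_\agent$ together with transitivity of $R_\agent$, and of \constnecrule{} via the constant-specification condition plus validity of the justified axiom, is exactly what the paper's semantic conditions are designed to make work.
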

Our completeness proof for $\LPLTL_\CS$ follows the one given in~\cite{HvdMV04}. 
First, we define
\[
\Gamma \vdash_\CS \phi \text{ iff there exist $\psi_1,\ldots,\psi_n \in \Gamma$ such that 
$\vdash_\CS (\psi_1 \land \cdots \land \psi_n) \to \phi$.}
\]
Following our convention, we will usually write $\Gamma \vdash \phi$ instead of $\Gamma \vdash_\CS \phi$.

\begin{definition}
 Let $\CS$ be a constant specification. A set $\Gamma$ of formulas is called \emph{$\CS$-consistent} if $\Gamma \not\vdash_\CS \bot$.
That means
$\not\vdash_\CS \bigwedge\Sigma \rightarrow\bot$, for each finite~$\Sigma \subseteq \Gamma$.
\end{definition}

For a formula $\chi$, let 
$\Sub^+(\chi) := \Sub(\chi) \cup \{ \neg \psi \ |\  \psi \in \Sub(\chi) \}$.
Let $\MCS_\chi$ denote the set of all maximally $\CS$-consistent subsets of $\Sub^+(\chi)$. We have the following facts for $\Gamma \in \MCS_\chi$:
 \begin{itemize}
 \item If 
$\Gamma \vdash_\CS \phi$, then $\vdash_\CS \bigwedge \Gamma \to \phi$.
 
\item If $\phi \in \Sub(\chi)$ and  $\phi \not\in \Gamma$, then $\neg \phi \in \Gamma$.

\item If $\phi \in \Sub^+(\chi)$ and  $\Gamma \vdash_\CS \phi$, then $\phi \in \Gamma$.

\item If $\psi \in \Sub^+(\chi)$, $\phi \in \Gamma$ and $\vdash_\CS \phi \rightarrow \psi$, then $\psi \in \Gamma$.
\end{itemize} 
    
We define the relation $R_\lnext$ on $\MCS_\chi$ as follows:
\[
\Gamma R_\lnext \Delta 
\quad\text{if{f}}\quad 
\nvdash_\CS \bigwedge \Gamma \to \lnot \lnext \bigwedge \Delta.
\]
From this definition we immediately get the following lemmas.

\begin{lemma}\label{lem:R-next}
Let $\Gamma,\Delta \in \MCS_\chi$, $\Gamma R_\lnext \Delta$, and $\phi \in \Sub(\chi)$.

\begin{enumerate}
\item  If\/ $\Gamma \vdash_\CS \lnext \phi$, then $\phi \in \Delta$.

\item  If\/ $\Gamma \vdash_\CS \neg \lnext \phi$, then $\neg \phi \in \Delta$.
\end{enumerate}
\end{lemma}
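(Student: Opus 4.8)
The plan is to prove both parts by contradiction, in each case contradicting the defining condition $\nvdash_\CS \bigwedge \Gamma \to \neg \lnext \bigwedge \Delta$ of $\Gamma R_\lnext \Delta$. The two ingredients I will rely on are the maximal-consistency facts listed just above the lemma (in particular that $\Gamma \vdash_\CS \phi$ yields $\vdash_\CS \bigwedge \Gamma \to \phi$, and that $\phi \in \Sub(\chi)$ together with $\phi \notin \Delta$ forces $\neg \phi \in \Delta$) together with the temporal rule $\nextnecrule$ and the axioms $\nextkax$ and $\funax$, which let me push an implication underneath the $\lnext$ operator.

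For part~(1), assume $\Gamma \vdash_\CS \lnext \phi$, so that $\vdash_\CS \bigwedge \Gamma \to \lnext \phi$. Suppose toward a contradiction that $\phi \notin \Delta$; since $\phi \in \Sub(\chi)$ this gives $\neg \phi \in \Delta$, hence $\vdash_\CS \bigwedge \Delta \to \neg \phi$. Applying $\nextnecrule$ and then $\nextkax$ I obtain $\vdash_\CS \lnext \bigwedge \Delta \to \lnext \neg \phi$, and $\funax$ rewrites $\lnext \neg \phi$ as $\neg \lnext \phi$, so $\vdash_\CS \lnext \bigwedge \Delta \to \neg \lnext \phi$. Combining this with $\vdash_\CS \bigwedge \Gamma \to \lnext \phi$ by propositional reasoning yields $\vdash_\CS \bigwedge \Gamma \to \neg \lnext \bigwedge \Delta$, contradicting $\Gamma R_\lnext \Delta$.

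Part~(2) is the mirror image, and is in fact slightly simpler since no appeal to $\funax$ is needed. Assume $\Gamma \vdash_\CS \neg \lnext \phi$, i.e. $\vdash_\CS \bigwedge \Gamma \to \neg \lnext \phi$, and suppose $\neg \phi \notin \Delta$. As $\phi \in \Sub(\chi)$, the maximal-consistency facts give $\phi \in \Delta$ and hence $\vdash_\CS \bigwedge \Delta \to \phi$. Using $\nextnecrule$ and $\nextkax$ I get $\vdash_\CS \lnext \bigwedge \Delta \to \lnext \phi$, and combining this propositionally with $\vdash_\CS \bigwedge \Gamma \to \neg \lnext \phi$ again produces $\vdash_\CS \bigwedge \Gamma \to \neg \lnext \bigwedge \Delta$, contradicting $\Gamma R_\lnext \Delta$.

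There is no genuine obstacle here; the only point requiring care is the bookkeeping around the definition of $R_\lnext$, which is phrased as a \emph{non}-derivability statement, so each argument must terminate by actually exhibiting the forbidden derivation $\vdash_\CS \bigwedge \Gamma \to \neg \lnext \bigwedge \Delta$. A second minor subtlety is the correct use of the maximal-consistency facts to pass between $\phi \notin \Delta$ and $\neg \phi \in \Delta$: this step relies on $\phi \in \Sub(\chi)$ rather than merely $\phi \in \Sub^+(\chi)$, which is exactly the hypothesis supplied in the lemma.
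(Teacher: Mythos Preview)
Your proof is correct and follows essentially the same approach as the paper: a contradiction argument that, from the assumption $\phi \notin \Delta$ (resp.\ $\neg\phi \notin \Delta$), manufactures the forbidden derivation $\vdash_\CS \bigwedge\Gamma \to \neg\lnext\bigwedge\Delta$ using $\nextnecrule$, $\nextkax$, and $\funax$. The only cosmetic difference is that the paper first passes through $\vdash_\CS \bigwedge\Gamma \to \lnext\neg\neg\phi$ and then to $\lnext\neg\bigwedge\Delta$ before applying $\funax$, whereas you necessitate $\bigwedge\Delta \to \neg\phi$ directly and apply $\funax$ on that side; the ingredients and structure are the same.
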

\begin{proof}
\begin{enumerate}
\item 
Suppose toward a contradiction that $\phi \not \in \Delta$. Thus $\neg \phi \in \Delta$. Since $\Gamma \vdash_\CS \lnext \phi$, we have
\mbox{$
\vdash_\CS \bigwedge\Gamma \to \lnext \phi.
$}
Hence 
$
\vdash_\CS \bigwedge\Gamma \to \lnext \neg\neg \phi.
$
 Therefore 
 $\vdash_\CS \bigwedge\Gamma \to \lnext \neg \bigwedge \Delta.
 $
  Thus 
  \[
  \vdash_\CS \bigwedge\Gamma \to \neg \lnext  \bigwedge \Delta,
  \]
   which would contradict $\Gamma R_\lnext \Delta$.
   
\item The proof is similar to part 1. \qedhere   
\end{enumerate}
\end{proof}

\begin{lemma}\label{l:4.4}
Let $\Gamma \in \MCS_\chi$ and let 
$S := \{ \Delta \in  \MCS_\chi \ |\  \Gamma R_\lnext \Delta\}$.
We have
\[
\vdash \bigwedge \Gamma \to \lnext \bigvee \big\{ \bigwedge \Delta \ |\ \Delta \in S \big\}.
\]
\end{lemma}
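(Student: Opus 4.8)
The plan is to reduce the claim to two ingredients: a purely combinatorial fact, namely that the disjunction of all state descriptions over $\MCS_\chi$ is a theorem, and the functionality of the next operator. First I would establish the auxiliary fact
\[
\vdash \bigvee_{\Delta \in \MCS_\chi} \bigwedge \Delta.
\]
To see this, enumerate $\Sub(\chi) = \{\phi_1,\dots,\phi_k\}$ and consider, for each truth assignment $f \colon \{1,\dots,k\} \to \{0,1\}$, the conjunction $D_f$ that takes $\phi_j$ when $f(j)=1$ and $\neg\phi_j$ when $f(j)=0$. Treating the $\phi_j$ as propositional atoms, $\bigvee_f D_f$ is a propositional tautology, hence derivable by \propax. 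For every $\CS$-inconsistent assignment $f$ we have $\vdash \neg D_f$, so these disjuncts may be dropped by propositional reasoning; the surviving $D_f$ are exactly the $\bigwedge\Delta$ for $\Delta \in \MCS_\chi$, since by the listed facts a maximal $\CS$-consistent subset of $\Sub^+(\chi)$ contains exactly one of $\phi,\neg\phi$ for each $\phi \in \Sub(\chi)$. This yields the displayed fact.

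Next I would record that $\lnext$ distributes over finite disjunctions, i.e. $\vdash \lnext(\alpha \lor \beta) \liff (\lnext\alpha \lor \lnext\beta)$; this follows from \funax (which gives $\vdash \lnext\neg\gamma \liff \neg\lnext\gamma$) together with \nextkax and propositional reasoning, and it extends to any finite disjunction by induction. Now split $\MCS_\chi$ into $S$ and its complement. Applying \nextnecrule to the auxiliary fact and then distributivity gives
\[
\vdash \lnext \bigvee_{\Delta \in S} \bigwedge \Delta \;\lor\; \lnext \bigvee_{\Delta \in \MCS_\chi \setminus S} \bigwedge \Delta.
\]
It remains to eliminate the second disjunct under the hypothesis $\bigwedge\Gamma$. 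For every $\Delta \notin S$ the relation $\Gamma R_\lnext \Delta$ fails, so by the definition of $R_\lnext$ we have $\vdash \bigwedge\Gamma \to \neg\lnext\bigwedge\Delta$. Conjoining these finitely many implications and using distributivity again yields $\vdash \bigwedge\Gamma \to \neg\lnext\bigvee_{\Delta \in \MCS_\chi \setminus S}\bigwedge\Delta$. Combining this with the previous display by propositional logic gives $\vdash \bigwedge\Gamma \to \lnext\bigvee_{\Delta \in S}\bigwedge\Delta$, as required.

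The routine parts are the propositional bookkeeping and the distributivity induction; the only place that needs a genuine idea is the auxiliary fact, and there the subtlety is purely the observation that inconsistent truth assignments are provably false and that the consistent ones biject with $\MCS_\chi$. I expect no real obstacle beyond keeping the finite index manipulations straight, including the degenerate case $S = \emptyset$, where $\bigvee_{\Delta \in S}\bigwedge\Delta$ is the empty disjunction $\lfalse$; note that consistency of $\Gamma$ together with $\vdash \lnext\ltrue$ (from \nextnecrule applied to $\vdash \ltrue$) in fact forces $S \neq \emptyset$, so this case does not genuinely arise.
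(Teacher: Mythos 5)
Your proof is correct and follows essentially the same route as the paper's: derive $\vdash \bigvee\{\bigwedge\Delta \mid \Delta \in \MCS_\chi\}$, apply \nextnecrule{} and the distribution of $\lnext$ over disjunction (via \funax{} and \nextkax), and then discard the disjuncts for $\Delta \notin S$ using the fact that failure of $\Gamma R_\lnext \Delta$ means $\vdash \bigwedge\Gamma \to \lnot\lnext\bigwedge\Delta$. The only difference is that you spell out details the paper leaves implicit (the DNF argument for the auxiliary fact, the distributivity induction, and the degenerate case $S = \emptyset$), which is fine.
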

\begin{proof}
First observe that for all $\Gamma, \Delta \in \MCS_\chi$ we have
\begin{equation}\label{eq:next:1}
(\text{not } \Gamma R_\lnext \Delta) 
\quad\text{implies}\quad
\vdash \bigwedge \Gamma \to \lnot \lnext \bigwedge \Delta.
\end{equation}
We also have
\[
\vdash \bigvee \big\{ \bigwedge \Delta \ |\ \Delta \in  \MCS_\chi \big\}.
\]
By necessitation we get
\[
\vdash \lnext\bigvee \big\{ \bigwedge \Delta \ |\ \Delta \in  \MCS_\chi \big\}
\]
and thus
\begin{equation}\label{eq:next:2}
\vdash \bigvee \big\{ \lnext \bigwedge \Delta \ |\ \Delta \in  \MCS_\chi \big\}.
\end{equation}
By \eqref{eq:next:1} we infer
\[
\vdash \bigwedge \Gamma \to \bigvee \big\{ \lnext \bigwedge \Delta \ |\ \Delta \in  \MCS_\chi  \text{ with } \Gamma R_\lnext \Delta \big\}
\]
and thus
\[
\vdash \bigwedge \Gamma \to \lnext \bigvee \big\{\bigwedge \Delta \ |\ \Delta \in  \MCS_\chi  \text{ with } \Gamma R_\lnext \Delta \big\}.
\qedhere
\]
\end{proof}

\begin{lemma}\label{l:serial:1}
The relation $R_\lnext$ is serial. 
That is for each $\Gamma \in \MCS_\chi$, there exists $\Delta \in \MCS_\chi$ with $\Gamma R_\lnext \Delta$.
\end{lemma}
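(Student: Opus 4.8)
The plan is to prove seriality of $R_\lnext$ by a contradiction argument that leverages Lemma~\ref{l:4.4}. Fix an arbitrary $\Gamma \in \MCS_\chi$ and let $S := \{ \Delta \in \MCS_\chi \mid \Gamma R_\lnext \Delta\}$ as in that lemma. Suppose toward a contradiction that $\Gamma$ has no $R_\lnext$-successor, i.e.\ $S = \emptyset$. The key observation is that the disjunction $\bigvee \{ \bigwedge \Delta \mid \Delta \in S\}$ over an empty index set is, by convention, the empty disjunction, which is provably equivalent to $\lfalse$. I would first make this convention explicit and note that $\vdash_\CS \bigvee \emptyset \liff \lfalse$.

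From here the argument is short. By Lemma~\ref{l:4.4} we have $\vdash_\CS \bigwedge \Gamma \to \lnext \bigvee \{ \bigwedge \Delta \mid \Delta \in S\}$, and since $S = \emptyset$ this reads $\vdash_\CS \bigwedge \Gamma \to \lnext \lfalse$. Using the temporal functionality axiom \funax, which gives $\lnext \lneg \ltrue \liff \lneg \lnext \ltrue$ and hence lets us derive $\vdash_\CS \lneg \lnext \lfalse$ (concretely, $\lnext \lfalse \to \lfalse$ follows because $\lnext \lfalse \liff \lnext \lneg \ltrue \liff \lneg \lnext \ltrue$, and $\lnext \ltrue$ is derivable by \nextnecrule from $\ltrue$), we conclude $\vdash_\CS \bigwedge \Gamma \to \lfalse$, i.e.\ $\vdash_\CS \lneg \bigwedge \Gamma$. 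This contradicts the $\CS$-consistency of $\Gamma$, since every element of $\MCS_\chi$ is $\CS$-consistent and therefore $\bigwedge \Gamma$ cannot be refutable.

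The only real subtlety, and the step I would be most careful about, is establishing $\vdash_\CS \lneg \lnext \lfalse$ cleanly from the available temporal axioms, since $\lnext$ is a genuine ``next'' operator (functional, hence serial on the intended models) rather than a $\Box$-style necessity for which $\lnext \lfalse \to \lfalse$ could fail. The axiom \funax\ is exactly what encodes this functionality at the syntactic level: it equates $\lnext \lneg$ with $\lneg \lnext$, which forces $\lnext$ to commute with negation and thus prevents $\lnext \lfalse$ from being consistent. I would write out that the derivation of $\vdash_\CS \lnext \ltrue$ uses \nextnecrule\ applied to the propositional tautology $\ltrue$, and then \funax\ together with propositional reasoning delivers $\vdash_\CS \lneg \lnext \lfalse$. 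Once this lemma is in hand the contradiction is immediate, so the bulk of the care goes into the empty-disjunction convention and the functionality step rather than into any lengthy calculation.
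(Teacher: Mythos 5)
Your proof is correct, and it packages the argument differently from the paper, even though both rest on the same engine. The paper does not invoke Lemma~\ref{l:4.4} as a black box; instead it reuses the intermediate fact~\eqref{eq:next:2} from that lemma's proof, namely $\vdash \bigvee \{\lnext \bigwedge \Delta \mid \Delta \in \MCS_\chi\}$: if $\Gamma$ had no successor, the definition of $R_\lnext$ gives $\vdash \bigwedge\Gamma \to \lnot \lnext \bigwedge \Delta$ for every $\Delta \in \MCS_\chi$, hence $\vdash \bigwedge\Gamma \to \lnot \bigvee\{\lnext\bigwedge\Delta \mid \Delta \in \MCS_\chi\}$, which together with \eqref{eq:next:2} contradicts the consistency of $\Gamma$. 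That route avoids both extra ingredients your version needs: the empty-disjunction convention and the refutation of $\lnext\lfalse$. Your decomposition is more modular and isolates a nice conceptual point, namely that \funax{} together with \nextnecrule{} makes $\lnext$ syntactically serial, i.e.\ $\vdash_\CS \lnot\lnext\lfalse$; your derivation of this is fine. The one thing you must add for full rigor: Lemma~\ref{l:4.4} is stated in the paper without any convention for the case $S = \emptyset$, so before applying it there you should check that its proof still establishes the statement under your convention --- otherwise one could object that the lemma tacitly assumes a successor exists, which would make your argument circular. The check succeeds, and without circularity: when $S = \emptyset$, the propositional step in that proof (combining \eqref{eq:next:2} with $\vdash \bigwedge\Gamma \to \lnot\lnext\bigwedge\Delta$ for all $\Delta \in \MCS_\chi$) yields $\vdash \bigwedge\Gamma \to \lfalse$ outright, from which $\vdash \bigwedge\Gamma \to \lnext\lfalse$ follows by the tautology instance $\lfalse \to \lnext\lfalse$, and seriality is nowhere presupposed. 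Note that once this check is unfolded, $\vdash \bigwedge\Gamma \to \lfalse$ is already in hand and your $\lnot\lnext\lfalse$ step becomes redundant --- which is exactly why the paper's more direct derivation is shorter.
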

\begin{proof}
Suppose towards a contradiction that for $\Gamma \in \MCS_\chi$ we have (not $\Gamma R_\lnext \Delta$) for all $\Delta \in \MCS_\chi$. Then 
$
\vdash \bigwedge \Gamma \to \lnot \lnext \bigwedge \Delta,
$
for all $\Delta \in \MCS_\chi$. Thus
\[
\vdash \bigwedge \Gamma \to \bigwedge \{\lnot \lnext \bigwedge \Delta \ |\ \Delta \in \MCS_\chi  \big\},
\]
and hence,
\begin{equation}\label{eq:serial:1}
\vdash \bigwedge \Gamma \to \lnot \bigvee \{ \lnext \bigwedge \Delta \ |\ \Delta \in \MCS_\chi  \big\}.
\end{equation}
On the other hand, from \eqref{eq:next:2} we deduce
\begin{equation}\label{eq:serial:2}
\vdash \bigwedge \Gamma \to \bigvee \big\{ \lnext \bigwedge \Delta \ |\ \Delta \in  \MCS_\chi \big\}.
\end{equation}
Since $\Gamma$ is consistent, \eqref{eq:serial:1} and \eqref{eq:serial:2} leads to a contradiction.
\end{proof}

 \begin{definition}
A finite sequence $(\Gamma_0, \Gamma_1, \ldots, \Gamma_n)$ of elements of~$\MCS_\chi$ is called a \emph{$\phi \luntil \psi$-sequence starting with $\Gamma$} if
\begin{enumerate}
\item $\Gamma_0 = \Gamma$,

\item $\Gamma_j R_\lnext \Gamma_{j+1}$, for all $j < n$,

\item $\psi \in \Gamma_n$,

\item $\phi \in \Gamma_j$, for all $j < n$.
\end{enumerate}
 \end{definition}

\begin{lemma}\label{lem:until-sequence}
For every $\Gamma \in \MCS_\chi$, if $\phi \luntil \psi \in \Gamma$, then there exists a  $\phi \luntil \psi$-sequence starting with $\Gamma$.
\end{lemma}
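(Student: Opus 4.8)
The plan is to argue by contradiction: assume that no $\phi \luntil \psi$-sequence starts with $\Gamma$ and derive that $\bigwedge \Gamma$ is inconsistent, contradicting $\Gamma \in \MCS_\chi$. First I would record two consequences of this assumption. Note that $\phi, \psi \in \Sub(\chi)$, so $\Gamma$ decides both. If $\psi \in \Gamma$, then the one-element sequence $(\Gamma)$ already witnesses the claim, so the assumption forces $\lneg \psi \in \Gamma$. Next, using $\phi \luntil \psi \in \Gamma$ together with the axiom $\utwoax$, propositional reasoning yields $\Gamma \vdash_\CS \phi \land \lnext(\phi \luntil \psi)$, and in particular $\phi \in \Gamma$.

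The heart of the proof is to build an invariant on which the derived rule $\uindrule$ can be applied. I would set
\[
Z := \{ \Delta \in \MCS_\chi \mid \phi \in \Delta \text{ and no } \phi \luntil \psi\text{-sequence starts with } \Delta \},
\qquad
\chi_Z := \bigvee \big\{ \bigwedge \Delta \mid \Delta \in Z \big\},
\]
and observe that $\Gamma \in Z$ by the previous paragraph. The two properties of $Z$ I would establish are: (i) every $\Delta \in Z$ contains $\lneg \psi$ (otherwise $(\Delta)$ is a sequence); and (ii) if $\Delta \in Z$ and $\Delta R_\lnext \Delta'$, then either $\Delta' \in Z$ or $\{\lneg \phi, \lneg \psi\} \subseteq \Delta'$. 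Property (ii) is the crucial closure condition: since $\phi \in \Delta$, any $\phi \luntil \psi$-sequence starting with $\Delta'$ could be prefixed by $\Delta$ to give one starting with $\Delta$, contradicting $\Delta \in Z$; hence $\Delta'$ admits no sequence, and then $\Delta'$ either still contains $\phi$ (so $\Delta' \in Z$) or contains $\lneg \phi$, in which case it also contains $\lneg \psi$, since the absence of a sequence rules out $\psi \in \Delta'$.

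With these in hand I would verify the premise of $\uindrule$ for $\chi_Z$. Property (i) gives $\vdash_\CS \chi_Z \to \lneg \psi$ immediately. For the temporal conjunct, fix $\Delta \in Z$; Lemma~\ref{l:4.4} gives $\vdash_\CS \bigwedge \Delta \to \lnext \bigvee \{ \bigwedge \Delta' \mid \Delta R_\lnext \Delta' \}$, and by (ii) each successor disjunct $\bigwedge \Delta'$ entails $\chi_Z \lor (\lneg \phi \land \lneg \psi)$, so $\vdash_\CS \bigwedge \Delta \to \lnext(\chi_Z \lor (\lneg \phi \land \lneg \psi))$; disjoining over $\Delta \in Z$ yields $\vdash_\CS \chi_Z \to \lnext(\chi_Z \lor (\lneg \phi \land \lneg \psi))$. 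Applying $\uindrule$ then gives $\vdash_\CS \chi_Z \to \lneg(\phi \luntil \psi)$. Since $\Gamma \in Z$ we have $\vdash_\CS \bigwedge \Gamma \to \chi_Z$, hence $\vdash_\CS \bigwedge \Gamma \to \lneg(\phi \luntil \psi)$; but $\phi \luntil \psi \in \Gamma$ gives $\vdash_\CS \bigwedge \Gamma \to \phi \luntil \psi$, so $\bigwedge \Gamma$ is inconsistent, the desired contradiction. The step deserving the most care is property (ii) and its precise matching to the disjunct $(\lneg \phi \land \lneg \psi)$ of the rule: it is exactly this observation that lets the invariant leak only into states where the until has already become unsatisfiable, which is what makes the induction rule applicable.
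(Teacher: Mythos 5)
Your proof is correct and takes essentially the same route as the paper's: build a disjunctive invariant over maximal consistent sets, establish the premise of $\uindrule$ using Lemma~\ref{l:4.4}, apply the rule, and contradict the consistency of $\Gamma$. The only difference is the choice of invariant---the paper uses the smallest set $T$ generated from $\Gamma$ by closing under $R_\lnext$-successors that contain $\phi$, whereas your $Z$ is the set of \emph{all} states containing $\phi$ from which no fulfilling sequence starts (a superset of $T$); this variant is equally valid and makes your properties (i) and (ii) somewhat more self-contained, since they follow directly from the definition of $Z$ rather than by tracing chains back to $\Gamma$.
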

\begin{proof}
Suppose $\phi \luntil \psi \in \Gamma$ and there exists no  $\phi \luntil \psi$-sequence starting with~$\Gamma$. 
We let $T$ be the smallest set of elements of $\MCS_\chi$ such that
\begin{enumerate}
\item $\Gamma \in T$;
\item for each $\Delta' \in \MCS_\chi$, if $\Delta \in T$, $\Delta R_\lnext \Delta '$, and $\phi \in \Delta'$, then $\Delta' \in T$.
\end{enumerate}
We find that $\vdash \bigwedge \Delta \to \lnot \psi$ 
for all $\Delta \in T$. 
 Let 
\[
\rho := \bigvee \big\{ \bigwedge \Delta \ |\ \Delta \in T \big\}.
\]
We have $\vdash \rho \to \lnot \psi$. 

Moreover, for each $\Delta \in T$ and each $\Delta' \in \MCS_\chi$ with $\Delta R_\lnext \Delta '$ , we have 
\[
\text{either}\quad
\Delta' \in T \quad \text{or} \quad \vdash \bigwedge \Delta' \to \lnot \phi \land \lnot \psi.
\]
Thus, by Lemma~\ref{l:4.4},
we get 
$
\vdash \rho \to \lnext (\rho \lor (\lnot \phi \land \lnot \psi)).
$
Using $\uindrule$, we obtain $\vdash \rho \to \lnot(\phi \luntil \psi)$.
Since $\Gamma \in T$, this implies  
$
\vdash \bigwedge \Gamma \to \lnot(\phi \luntil \psi),
$
which contradicts the assumption
$\phi \luntil \psi \in \Gamma$.
\end{proof}
 
 \begin{definition}\label{Def:acceptable sequence}
An infinite sequence $(\Gamma_0, \Gamma_1, \ldots)$ of elements of $\MCS_\chi$ is called \textit{acceptable} if
\begin{enumerate}
\item $\Gamma_n R_\lnext \Gamma_{n+1}$ for all $n \geq 0$, and
\item for all $n$, if $\phi \luntil \psi \in \Gamma_n$, then there exists $m \geq n$ such that $\psi \in \Gamma_m$ and $\phi \in \Gamma_k$ for all $k$ with $n \leq k <m$.
\end{enumerate}
 \end{definition}
 
\begin{lemma}\label{lem:finite seq to acceptable seq}
Every finite sequence $(\Gamma_0, \Gamma_1, \ldots, \Gamma_n)$ of elements of\/ $\MCS_\chi$ with $\Gamma_j R_\lnext \Gamma_{j+1}$, for all $j < n$, can be extended to an  acceptable sequence.
\end{lemma}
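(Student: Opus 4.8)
The plan is to construct the desired infinite sequence in stages, extending the given finite $R_\lnext$-chain $(\Gamma_0,\dots,\Gamma_n)$ one block at a time while scheduling eventualities in a round-robin fashion so that each is eventually fulfilled. The two engines driving the construction are already available: Lemma~\ref{l:serial:1} (seriality of $R_\lnext$) lets us always append at least one further state, and Lemma~\ref{lem:until-sequence} lets us append a finite $\phi\luntil\psi$-sequence realizing a pending eventuality. Since $\Sub(\chi)$ is finite, there are only finitely many until-subformulas $\eta_1,\dots,\eta_N$ (say $\eta_l=\alpha_l\luntil\beta_l$) to worry about, and no other until-formula can occur in any $\Gamma_k\subseteq\Sub^+(\chi)$; if $N=0$ the claim is immediate, since any chain obtained by repeatedly applying seriality is then acceptable.

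The key preliminary observation is a propagation property for unfulfilled eventualities: if $\phi\luntil\psi\in\Gamma_k$ and $\psi\notin\Gamma_k$ (hence $\lnot\psi\in\Gamma_k$, as $\psi\in\Sub(\chi)$), then by axiom $\utwoax$ we have $\Gamma_k\vdash\phi$ and $\Gamma_k\vdash\lnext(\phi\luntil\psi)$; the first gives $\phi\in\Gamma_k$, and the second, via Lemma~\ref{lem:R-next}(1) together with $\Gamma_k R_\lnext\Gamma_{k+1}$, gives $\phi\luntil\psi\in\Gamma_{k+1}$. Thus an eventuality that is present at some position but not yet satisfied (no $\psi$ seen with $\phi$ holding throughout) remains at every later frontier of the sequence, dragging $\phi$ along with it, until it is finally fulfilled.

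With this in hand I would build the sequence as follows. Having produced a finite chain with last element $\Gamma_m$, at stage $t$ I consider the eventuality $\eta_j$ with $j=(t\bmod N)+1$. If $\eta_j=\alpha\luntil\beta\in\Gamma_m$ with $\beta\notin\Gamma_m$, I append the tail of an $\alpha\luntil\beta$-sequence starting at $\Gamma_m$ supplied by Lemma~\ref{lem:until-sequence}; otherwise I append a single $R_\lnext$-successor of $\Gamma_m$ given by Lemma~\ref{l:serial:1}. Each stage extends the chain by at least one state, so the limit is an infinite sequence whose adjacent pairs are related by $R_\lnext$, giving condition~1 of Definition~\ref{Def:acceptable sequence}.

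For condition~2, fix $n$ with $\phi\luntil\psi\in\Gamma_n$, say $\phi\luntil\psi=\eta_j$. If it is already satisfied from $n$ we are done; otherwise, by the propagation property it persists at the frontier with $\phi$ holding at each position from $n$ onward as long as it stays unfulfilled. The frontier index grows without bound and $\eta_j$ is scheduled at infinitely many stages, so there is a stage with frontier $\Gamma_m$ where $m\ge n$ at which $\eta_j$ is scheduled while still unfulfilled; then $\phi\luntil\psi\in\Gamma_m$ and $\psi\notin\Gamma_m$, Case~A applies, and the appended $\phi\luntil\psi$-sequence yields some $m'>m$ with $\psi\in\Gamma_{m'}$ and $\phi\in\Gamma_k$ for $m\le k<m'$. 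Combined with $\phi\in\Gamma_k$ for $n\le k<m$, this fulfills the eventuality from $n$. The one delicate point --- and the heart of the argument --- is precisely this fairness bookkeeping: one must verify that no eventuality can be starved, which is exactly what the combination of finiteness of $\Sub(\chi)$, round-robin scheduling, and the propagation property guarantees.
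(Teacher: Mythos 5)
Your proof is correct and takes essentially the same route as the paper's: both rely on propagating unfulfilled eventualities forward via axiom $\utwoax$ together with Lemma~\ref{lem:R-next}, fulfill them by appending the finite sequences supplied by Lemma~\ref{lem:until-sequence}, and use seriality (Lemma~\ref{l:serial:1}) to keep the chain growing. Your round-robin scheduling over the finitely many until-subformulas is just a more explicit rendering of the fairness bookkeeping that the paper handles by processing obligations position by position and then passing to the limit.
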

\begin{proof}
In order to fulfill the requirements of Definition \ref{Def:acceptable sequence}, we shall extend the sequence $(\Gamma_0, \Gamma_1, \ldots, \Gamma_n)$ by the following algorithm.

Suppose $\phi \luntil \psi \in \Gamma_0$. Then either $\psi \in \Gamma_0$ or $\neg\psi \in \Gamma_0$. In the former case the requirement is fulfilled for the formula $\phi \luntil \psi$  in $\Gamma_0$, and we go to the next step. In the latter case, using axiom $(\luntil 2)$, 
\[
\Gamma_0 \vdash_\CS \phi \wedge \lnext (\phi \luntil \psi). 
\]
Since $\Gamma_0 R_\lnext \Gamma_1$, by Lemma \ref{lem:R-next}, we get $\phi \luntil \psi \in \Gamma_1$.  

We can repeat this argument for $\Gamma_i$ for 
$1\leq i \leq n$. We find that the requirement for $\phi \luntil \psi \in \Gamma_0$ is either fulfilled in $(\Gamma_0, \Gamma_1, \ldots, \Gamma_n)$ or  $\phi \luntil \psi \in \Gamma_n$ and  $\phi \in \Gamma_i$ for $1\leq i \leq n$. 
In the latter case, by Lemma~\ref{lem:until-sequence}, there exists a sequence $(\Gamma_n, \Gamma_{n+1}, \ldots, \Gamma_{n+m})$
such that 
$\phi \in \Gamma_i$ for $n\leq i < n+m$,   
$\psi \in \Gamma_{n+m}$, 
and  $\Gamma_i R_\lnext \Gamma_{i+1}$ for $n\leq i < n+m$. This gives a finite extension of the original sequence that satisfies the requirement imposed by $\phi \luntil \psi \in \Gamma_0$.

In the next step we repeat this argument for the remaining obligations at $\Gamma_0$. Eventually we obtain a finite sequence that satisfies all requirements imposed by formulas at $\Gamma_0$.

We may move on to $\Gamma_1$ and apply the same procedure. 
It is clear that by iterating it we obtain in the limit an acceptable sequence that extends $(\Gamma_0, \Gamma_1, \ldots, \Gamma_n)$.
\end{proof}
 
\begin{corollary}\label{cor:acceptable sequence starts with Gamma}
For every $\Gamma \in \MCS_\chi$, there is an acceptable sequence that starts with $\Gamma$. 
\end{corollary}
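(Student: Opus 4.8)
The plan is to read the Corollary as the special case of Lemma~\ref{lem:finite seq to acceptable seq} in which the given finite sequence has length one. First I would consider the singleton sequence $(\Gamma)$, that is, the sequence $(\Gamma_0)$ with $\Gamma_0 = \Gamma$ and $n = 0$. The hypothesis of Lemma~\ref{lem:finite seq to acceptable seq} requires $\Gamma_j R_\lnext \Gamma_{j+1}$ for all $j < n$; since $n = 0$ there are no such indices $j$, so this condition is satisfied vacuously. Hence $(\Gamma)$ is a legitimate input to the lemma.

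Next I would simply invoke Lemma~\ref{lem:finite seq to acceptable seq} to extend $(\Gamma)$ to an acceptable sequence $(\Gamma_0, \Gamma_1, \ldots)$. By the meaning of \emph{extending} a sequence, the given initial segment is preserved, so $\Gamma_0 = \Gamma$, while the extended sequence is acceptable by the conclusion of the lemma. This is precisely an acceptable sequence starting with $\Gamma$, which is what the Corollary asserts.

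There is no genuine obstacle here, since all of the real work has already been carried out in Lemma~\ref{lem:finite seq to acceptable seq}. The only two things worth verifying are (i) that a singleton sequence indeed meets the (vacuous) hypothesis of that lemma, and (ii) that the extension procedure in the proof of the lemma leaves the first element $\Gamma_0$ untouched. Both are immediate from the statement and construction. In the background, seriality of $R_\lnext$ (Lemma~\ref{l:serial:1}) ensures that the iterative extension never gets stuck for lack of a successor, so that the limit sequence produced by Lemma~\ref{lem:finite seq to acceptable seq} is genuinely infinite.
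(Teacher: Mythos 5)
Your proof is correct and is exactly the argument the paper intends: the corollary is stated without proof precisely because it is the instance of Lemma~\ref{lem:finite seq to acceptable seq} applied to the one-element sequence $(\Gamma)$, whose hypothesis holds vacuously. Your added remarks on preservation of the initial segment and on seriality (Lemma~\ref{l:serial:1}) are sound and consistent with the paper's construction.
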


\begin{definition}
The $\chi$-canonical interpreted system 
\[
\system = (\runs, S, R_1,\ldots,R_\numberofagents, \evidence_1\ldots,\evidence_\numberofagents, \valuation)
\] 
for $\CS$
is defined as follows:
\begin{enumerate}
\item $\runs$ consists of all mappings $r: \N \to \MCS_\chi$ such that
$(r(0), r(1), \ldots)$ is an acceptable sequence;
\item $S := \MCS_\chi = \{ r(n) \ |\ r\in \runs, n \in \N \}$;
\item $R_i(\Gamma, \Delta)$ if{f} 
$\{ \phi \ |\  \Gamma \vdash \jbox{t}_i \phi   \text{ for some $t$}\} \subseteq \{ \phi \ |\ \Delta \vdash \phi\}$; 
\item $\evidence_i (\Gamma, t) := \{ \phi \ |\  \Gamma \vdash \jbox{t}_i \phi \}$;
\item $\valuation(\Gamma) := \{P \in \Prop\ |\  P \in \Gamma\}$.
\end{enumerate}
\end{definition}

\begin{remark}
The $\chi$-canonical interpreted system~$\system$ for~$\CS$ is a finite structure in the sense that the set of states~$S$ is finite.
This is a novelty for completeness proofs of justification logics. Even the completeness proofs for justification logics with common knowledge~\cite{Art06TCS,BucKuzStu11JANCL} work with infinite canonical structures. Note that this remark concerns epistemic Fitting-models. Of course, symbolic M-models~\cite{Mkr97LFCS} could be considered as single-world Fitting-models.

The fact that states of~$\system$ are maximally $\CS$-consistent subsets of $\Sub^+(\chi)$---instead of just maximally $\CS$-consistent sets---%
matters
for the definitions of $R_i$ and $\evidence_i$.
The usual definitions would be 
\begin{gather*}
R_i(\Gamma, \Delta)  \text{ if{f} } 
\{ \phi \ |\  \jbox{t}_i \phi \in \Gamma   \text{ for some $t$}\} \subseteq \{ \phi \ |\  \phi \in \Delta\} \qquad \text{and }\\
\evidence_i (\Gamma, t) := \{ \phi \ |\  \jbox{t}_i \phi \in \Gamma \}.
\end{gather*}
This, however, would not work for our finite canonical structure. In particular the next lemma could not be established as, for instance, 
$\jbox{t}_i \phi \in \Gamma$ does not imply $\jbox{!t}_i  \jbox{t}_i \phi \in \Gamma$ for $\Gamma \in \MCS_\chi$.
\end{remark}

\begin{lemma}
The $\chi$-canonical interpreted system 
\[
\system = (\runs, S, R_1,\ldots,R_\numberofagents, \evidence_1\ldots,\evidence_\numberofagents, \valuation)
\] 
for $\CS$ is an interpreted system for $\CS$.
\end{lemma}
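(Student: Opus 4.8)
The plan is to verify directly that the tuple $\system$ satisfies every clause in the definition of an interpreted system for $\CS$; since almost each clause reduces to a single justification axiom, I would organize the argument as a checklist. First I would record that $S = \MCS_\chi$ is non-empty, because $\Sub^+(\chi)$ has at least one maximally $\CS$-consistent subset (extend any $\CS$-consistent set, e.g.\ $\emptyset$, by the usual Lindenbaum construction relative to $\Sub^+(\chi)$). Non-emptiness of $\runs$ and the identity $S = \{ r(n) \mid r \in \runs,\ n \in \N\}$ then both follow from Corollary~\ref{cor:acceptable sequence starts with Gamma}: each $\Gamma \in \MCS_\chi$ begins an acceptable sequence, hence equals $r(0)$ for some run $r$, while conversely every $r(n)$ lies in $\MCS_\chi$ by construction. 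Each run maps $\N \to \MCS_\chi = S$, so $\runs$ is a genuine system on the frame, and the valuation clause is immediate, as $\valuation(\Gamma) = \{P \in \Prop \mid P \in \Gamma\}$ is patently a map $S \to \powerset(\Prop)$.

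Next I would check that $(S, R_1, \ldots, R_\numberofagents)$ is a frame, i.e.\ that each $R_i$ is reflexive and transitive. Reflexivity uses \refax: if $\Gamma \vdash \jbox{t}_i \phi$ then $\Gamma \vdash \phi$, so the defining inclusion for $R_i(\Gamma, \Gamma)$ holds. Transitivity is where the design of the definitions matters: given $R_i(\Gamma, \Delta)$, $R_i(\Delta, \Theta)$, and a formula $\phi$ with $\Gamma \vdash \jbox{t}_i \phi$, I would apply \posintax\ to obtain $\Gamma \vdash \jbox{\tinspect t}_i \jbox{t}_i \phi$; this witnesses $\jbox{t}_i \phi$ in the left-hand set of $R_i(\Gamma, \Delta)$, so $\Delta \vdash \jbox{t}_i \phi$, and then $R_i(\Delta, \Theta)$ yields $\Theta \vdash \phi$, as required.

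Finally I would verify that each $\evidence_i(\Gamma, t) = \{\phi \mid \Gamma \vdash \jbox{t}_i \phi\}$ is a $\CS$-evidence function, treating its five conditions in turn. The constant specification condition follows from the rule $\constnecrule$ (if $\jbox{c}_i \phi \in \CS$ then $\vdash \jbox{c}_i \phi$, hence $\Gamma \vdash \jbox{c}_i \phi$); application, sum, and positive introspection follow by propositional reasoning from \appax, \sumax, and \posintax\ respectively, using the first bullet fact ($\Gamma \vdash \phi$ implies $\vdash \bigwedge \Gamma \to \phi$) to combine the relevant derivations. Monotonicity is the one non-local condition: for $R_i(\Gamma, \Delta)$ and $\phi \in \evidence_i(\Gamma, t)$ I would again invoke \posintax\ to push $\jbox{t}_i \phi$ through the defining inclusion of $R_i$, obtaining $\Delta \vdash \jbox{t}_i \phi$ and hence $\phi \in \evidence_i(\Delta, t)$.

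The main obstacle is conceptual rather than computational, and it is exactly the point flagged in the preceding remark: because the states are maximally $\CS$-consistent subsets of $\Sub^+(\chi)$ rather than full maximally consistent sets, one cannot phrase $R_i$ and $\evidence_i$ through membership ($\jbox{t}_i \phi \in \Gamma$). The derivability-based definitions are precisely what make transitivity and monotonicity go through, since $\jbox{\tinspect t}_i \jbox{t}_i \phi$ need not belong to $\Sub^+(\chi)$ and so need not be a member of any state even when it is derivable. I would therefore be careful that every step above uses $\Gamma \vdash \cdots$ and never $\cdots \in \Gamma$ at the critical junctures involving \posintax.
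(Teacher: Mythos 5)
Your proof is correct and takes essentially the same approach as the paper: the one condition the paper verifies explicitly---monotonicity of $\evidence_i$---is handled in your argument exactly as the paper does it, via \posintax{} applied to $\Gamma \vdash \jbox{t}_i \phi$ together with the derivability-based (rather than membership-based) definitions of $R_i$ and $\evidence_i$. The remaining clauses, which you check one by one (non-emptiness, reflexivity, transitivity, and the other evidence conditions), are the routine verifications that the paper delegates to Fitting's single-agent proof, and your checklist fills them in correctly.
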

\begin{proof}
The proof is essentially the same as the corresponding proof for single agent Fitting-models in~\cite{Fit05APAL}.
Let us only show here the monotonicity condition for $\evidence_i$.

Suppose $\Gamma, \Delta \in S$ and $R_i(\Gamma, \Delta)$. 
Suppose that $\phi \in \evidence_i (\Gamma,t)$. Thus $\Gamma \vdash \jbox{t}_i \phi$. Hence $\Gamma \vdash \jbox{!t}_i \jbox{t}_i \phi$. Since $R_i(\Gamma,\Delta)$, we have $\Delta \vdash \jbox{t}_i \phi$. Hence $\phi \in \evidence_i (\Delta,t)$ as desired.
\end{proof}

\begin{lemma}[Truth Lemma]
Let 
$
\system = (\runs, S, R_1,\ldots,R_\numberofagents, \evidence_1\ldots,\evidence_\numberofagents, \valuation)
$ 
be the $\chi$-canonical interpreted system for~$\CS$. For every formula $\psi \in \Sub^+(\chi)$, every run~$r$ in $\runs$, and every $n \in \N$ we have:
\[
(\system, r, n) \models \psi 
\quad\text{if{f}}\quad
 \psi \in r(n). 
\]
\end{lemma}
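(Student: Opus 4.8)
The plan is to prove the biconditional by induction on the structure of $\psi$, reading the semantic clauses on the left-hand side and invoking the four closure facts for members of $\MCS_\chi$ on the right-hand side. The base cases $\psi = P$ and $\psi = \lfalse$ are immediate: $P \in \valuation(r(n))$ holds iff $P \in r(n)$ by the definition of $\valuation$, while neither $(\system, r, n) \models \lfalse$ nor $\lfalse \in r(n)$ (the latter by consistency) can hold. For the propositional step it suffices to treat negation and implication. For $\psi = \neg\theta$ with $\theta \in \Sub(\chi)$ I would combine the induction hypothesis on $\theta$ with the fact that exactly one of $\theta$ and $\neg\theta$ lies in $r(n)$; the implication case likewise follows from the induction hypothesis together with closure of $r(n)$ under provable consequence within $\Sub^+(\chi)$.

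For $\psi = \lnext\theta$ I would use that every $r \in \runs$ is an acceptable sequence, so $r(n) R_\lnext r(n+1)$. If $\lnext\theta \in r(n)$, then $r(n) \vdash_\CS \lnext\theta$ and Lemma~\ref{lem:R-next}(1) gives $\theta \in r(n+1)$, so the induction hypothesis yields $(\system, r, n+1) \models \theta$, i.e.\ $(\system, r, n) \models \lnext\theta$; conversely, if $\lnext\theta \notin r(n)$ then $\neg\lnext\theta \in r(n)$, and Lemma~\ref{lem:R-next}(2) forces $\neg\theta \in r(n+1)$, whence $(\system, r, n) \not\models \lnext\theta$. The case $\psi = \jbox{t}_\agent\theta$ I would settle directly from the definitions of $\evidence_\agent$ and $R_\agent$: membership $\jbox{t}_\agent\theta \in r(n)$ is equivalent to $\theta \in \evidence_\agent(r(n), t)$, and whenever $R_\agent(r(n), r'(n'))$ holds the definition of $R_\agent$ turns $r(n) \vdash_\CS \jbox{t}_\agent\theta$ into $r'(n') \vdash_\CS \theta$, hence $\theta \in r'(n')$ and, by the induction hypothesis, $(\system, r', n') \models \theta$. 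For the converse only the evidence clause is needed, since $\theta \in \evidence_\agent(r(n), t)$ already gives $r(n) \vdash_\CS \jbox{t}_\agent\theta$ and therefore $\jbox{t}_\agent\theta \in r(n)$.

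The case $\psi = \phi \luntil \theta$ is where the real work lies. The forward direction is easy: if $\phi \luntil \theta \in r(n)$, then clause~2 of acceptability (Definition~\ref{Def:acceptable sequence}) supplies an index $m \geq n$ with $\theta \in r(m)$ and $\phi \in r(k)$ for $n \leq k < m$, and the induction hypothesis converts these memberships into the truth conditions for $\phi \luntil \theta$ at $r(n)$. For the converse I expect the main obstacle, since the natural witness $\lnext(\phi \luntil \theta)$ need not lie in $\Sub^+(\chi)$, so I cannot reason by membership of that formula. Instead I would argue contrapositively. Assuming $\phi \luntil \theta \notin r(n)$, so $\neg(\phi \luntil \theta) \in r(n)$, the key is the propagation step that $\neg(\phi \luntil \theta) \in \Gamma$, $\phi \in \Gamma$ and $\Gamma R_\lnext \Delta$ together imply $\neg(\phi \luntil \theta) \in \Delta$: axiom $\utwoax$ gives $\Gamma \vdash_\CS \neg\theta$ and $\Gamma \vdash_\CS \neg\phi \lor \neg\lnext(\phi \luntil \theta)$, and with $\phi \in \Gamma$ this becomes $\Gamma \vdash_\CS \neg\lnext(\phi \luntil \theta)$, to which Lemma~\ref{lem:R-next}(2) applies since $\phi \luntil \theta \in \Sub(\chi)$. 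Taking any semantic witness $m$ from the assumed truth of $\phi \luntil \theta$ at $r(n)$ and iterating this step along $r(n), r(n+1), \ldots$ keeps $\neg(\phi \luntil \theta)$, and hence $\neg\theta$, in every state up to $r(n+m)$, contradicting $\theta \in r(n+m)$. This forces $\phi \luntil \theta \in r(n)$ and completes the induction.
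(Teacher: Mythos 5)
Your proof is correct, and for most of the induction it coincides with the paper's: the same structural induction, the same handling of the atomic, propositional, $\jbox{t}_\agent$, and $\lnext$ cases (the latter two via the canonical definitions of $\evidence_\agent$, $R_\agent$ and via Lemma~\ref{lem:R-next}), and the same appeal to acceptability for the membership-to-truth direction of $\luntil$. The one place where you genuinely diverge is the truth-to-membership direction for $\psi_1 \luntil \psi_2$, which is indeed the crux. The paper proves membership directly by an inner induction on the witness $m$, propagating $\psi_1 \luntil \psi_2$ \emph{backwards}: the inner hypothesis gives $\psi_1 \luntil \psi_2 \in r(n+1)$, from which $r(n)\, R_\lnext\, r(n+1)$ yields that $r(n) \cup \{\lnext(\psi_1 \luntil \psi_2)\}$ is consistent, and assuming $\lnot(\psi_1 \luntil \psi_2) \in r(n)$ then contradicts this consistency via \utwoax{} and $\psi_1 \in r(n)$. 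You instead argue contrapositively and propagate $\lnot(\psi_1 \luntil \psi_2)$ \emph{forwards} from $r(n)$ to $r(n+m)$, using \utwoax{} together with Lemma~\ref{lem:R-next}(2) at every step, and land the contradiction at $r(n+m)$, where the structural induction hypothesis has placed $\psi_2$. The two arguments are mirror images of each other, and both correctly respect the restriction to $\Sub^+(\chi)$ even though $\lnext(\psi_1 \luntil \psi_2)$ need not belong to it: the paper by reasoning only about consistency with that formula (never its membership), you by keeping it strictly inside derivations so that only the already-proved Lemma~\ref{lem:R-next}(2) is ever invoked. Your version buys a small economy --- every propagation step is an application of an established lemma, whereas the paper's consistency claim about $r(n) \cup \{\lnext(\psi_1 \luntil \psi_2)\}$ requires its own (elided) justification using \nextnecrule{} and \nextkax{}; the paper's version buys a direct, non-contrapositive statement of what is being proved at each inner step.
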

\begin{proof}
As usual, the proof is by induction on the structure of $\psi$. We show only the following cases:
\begin{itemize}
\item $\psi = \jbox{t}_i \phi$.
$(\Rightarrow)$ If $(\system, r, n) \models \jbox{t}_i \phi$, then $\phi \in \evidence_i (r(n),t)$. Thus, by definition, $r(n) \vdash \jbox{t}_i \phi$. Hence $\jbox{t}_i \phi \in r(n)$, since $\jbox{t}_i \phi \in \Sub^+(\chi)$. 

$(\Leftarrow)$ If $\jbox{t}_i \phi \in r(n)$, then $r(n) \vdash \jbox{t}_i \phi$. Hence, by definition, $\phi \in \evidence_i (r(n),t)$. 
Now suppose that $R_i (r(n), r'(n'))$. We find $r'(n') \vdash \phi$. Since $\phi  \in \Sub^+(\chi)$, we have $\phi \in r'(n')$ and by I.H.~we get $(\system, r', n') \models \phi$. Since $r'$ and $n'$ were arbitrary, we conclude $(\system, r, n) \models \jbox{t}_i \phi$.
\item $\psi = \lnext \phi$.
$(\Rightarrow)$ Suppose that $(\system, r, n) \models \lnext \phi$ and $\lnext \phi \not\in r(n)$.  Then $(\system, r, n+1) \models  \phi$, and hence by the induction hypothesis $\phi \in r(n+1)$. On the other hand, $\neg \lnext \phi \in r(n)$. Since $r(n) R_\lnext r(n+1)$, by Lemma \ref{lem:R-next}, we get  $\neg  \phi \in r(n+1)$, which is a contradiction.

$(\Leftarrow)$ If $\lnext \phi \in r(n)$, then $\phi \in r(n+1)$. By the induction hypothesis, $(\system, r, n+1) \models  \phi$, and hence $(\system, r, n) \models \lnext \phi$.
\item $\psi = \psi_1 \luntil \psi_2$.
$(\Rightarrow)$ If $(\system, r, n) \models  \psi_1 \luntil \psi_2$, then $(\system, r, m) \models  \psi_2$ for some $m \geq n$, and $(\system, r, k) \models  \psi_1$ for all $k$ with $n \leq k < m$. By I.H.~we get $\psi_2 \in r(m)$, and $\psi_1 \in r(k)$ for all~$k$ with $n \leq k < m$. We have to show $\psi_1 \luntil \psi_2 \in r(n)$, which follows by induction on $m$ as follows:
\begin{itemize}
\item
Base case $m=n$. Since $\psi_2 \in r(n)$ and $\vdash \psi_2 \rightarrow (\psi_1 \luntil \psi_2)$, we obtain $\psi_1 \luntil \psi_2 \in r(n)$. 
\item
Suppose $m > n$. It follows from the induction hypothesis that $\psi_1 \luntil \psi_2 \in r(n+1)$. 
From this and $r(n) R_\lnext r(n+1)$ we get that
\begin{equation}\label{eq:tl:1}
\text{$r(n) \cup \{\lnext(\psi_1 \luntil \psi_2) \}$ is consistent.}
\end{equation}
Assume now
\begin{equation}\label{eq:tl:2}
\lnot(\psi_1 \luntil \psi_2) \in r(n).
\end{equation}
Then
$
r(n) \vdash \lnot(\psi_1 \luntil \psi_2)
$
and by axiom~$\utwoax$ we find
$
r(n) \vdash \lnot(\psi_1 \land \lnext(\psi_1 \luntil \psi_2))
$.
From $\psi_1 \in r(n)$ we get $r(n) \vdash \psi_1$ and thus
$
r(n) \vdash \lnot \lnext(\psi_1 \luntil \psi_2),
$
which contradicts~\eqref{eq:tl:1}.
Hence the assumption~\eqref{eq:tl:2} must be false and we conclude
$\psi_1 \luntil \psi_2 \in r(n)$.
\end{itemize}

$(\Leftarrow)$ If $\psi_1 \luntil \psi_2 \in r(n)$, then since $(r(n),r(n+1),\ldots)$ is an acceptable sequence there exists $m \geq n$ such that $\psi_2 \in r(m)$, and $\psi_1 \in r(k)$ for all~$k$ with $n \leq k < m$. By
I.H.~we obtain 
$(\system, r, m) \models  \psi_2$, and $(\system, r, k) \models  \psi_1$ for all $k$ with $n \leq k < m$. Thus 
$(\system, r, n) \models  \psi_1 \luntil \psi_2$.
\qedhere
\end{itemize}
\end{proof}

\begin{theorem}[Completeness]\label{thm:Completeness-interpreted systems}
For each formula $\phi$,
   \[
   \models_\CS \phi  \quad\text{implies}\quad \vdash_\CS \phi.
   \]
\end{theorem}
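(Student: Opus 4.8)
The plan is to prove completeness by contraposition, i.e.\ to show that $\not\vdash_\CS \phi$ implies $\not\models_\CS \phi$, using the $\chi$-canonical interpreted system as the refuting model. So first I would assume $\not\vdash_\CS \phi$ and set $\chi := \phi$, a choice that is important because it guarantees that both $\phi$ and $\lneg \phi$ lie in $\Sub^+(\chi)$, which is exactly the range of formulas to which the Truth Lemma applies. From $\not\vdash_\CS \phi$ we get that $\{\lneg \phi\}$ is $\CS$-consistent, and since $\Sub^+(\chi)$ is a finite set, a finite Lindenbaum-style construction (greedily deciding each formula of $\Sub(\chi)$) extends $\{\lneg \phi\}$ to some $\Gamma \in \MCS_\chi$ with $\lneg \phi \in \Gamma$, equivalently $\phi \notin \Gamma$.

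Next I would feed $\Gamma$ into the machinery already established. By the preceding lemma, the $\chi$-canonical interpreted system $\system$ for $\CS$ is genuinely an interpreted system for $\CS$, so it is a legitimate model. By Corollary~\ref{cor:acceptable sequence starts with Gamma} there is an acceptable sequence starting with $\Gamma$, and by the definition of the canonical system such a sequence is precisely a run $r \in \runs$ with $r(0) = \Gamma$. Finally I would apply the Truth Lemma: since $\lneg \phi \in \Sub^+(\chi)$ and $\lneg \phi \in r(0)$, we obtain $(\system, r, 0) \models \lneg \phi$, hence $(\system, r, 0) \not\models \phi$. This witnesses $\system \not\models \phi$, and therefore $\not\models_\CS \phi$, which completes the contrapositive and thus the theorem.

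The bulk of the real difficulty has already been discharged in the supporting lemmas rather than in this final assembly: the delicate parts are the Truth Lemma's $\luntil$-case (which leans on the existence of acceptable sequences through Lemma~\ref{lem:until-sequence} and the derived rule $\uindrule$) and the verification of the evidence-function conditions for $\evidence_i$, which in turn forces the nonstandard definition of $R_i$ and $\evidence_i$ in terms of $\Gamma \vdash \jbox{t}_i \phi$. Given those, the only genuinely new ingredient here is the finite Lindenbaum extension producing $\Gamma$, and because $\Sub^+(\chi)$ is finite this is routine. The step I would watch most carefully is the deliberate choice $\chi = \phi$: if one instead built the canonical system for an unrelated $\chi$, the Truth Lemma would not be guaranteed to speak about $\phi$ itself, so pinning $\chi$ to $\phi$ is the small but essential hinge on which the argument turns.
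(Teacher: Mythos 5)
Your proposal is correct and follows essentially the same route as the paper's own proof: take $\chi=\phi$, extend the $\CS$-consistent set $\{\lneg\phi\}$ to some $\Gamma\in\MCS_\phi$, obtain a run $r$ with $r(0)=\Gamma$ via Corollary~\ref{cor:acceptable sequence starts with Gamma}, and conclude $(\system,r,0)\not\models\phi$ by the Truth Lemma. The extra remarks you add (the finite Lindenbaum construction over $\Sub^+(\phi)$ and the importance of pinning $\chi$ to $\phi$) are details the paper leaves implicit, but they match its argument exactly.
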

\begin{proof}
Suppose that $\not\vdash_\CS \phi$. Thus, $\{ \neg \phi\}$ is a $\CS$-consistent set.
Therefore, there exists $\Gamma \in \MCS_{\phi}$ with $\neg \phi \in \Gamma$.
By Corollary \ref{cor:acceptable sequence starts with Gamma}, there is an acceptable sequence starting with $\Gamma$.
Thus there is a run $r$ in the $\phi$-canonical interpreted system~$\system$ for $\CS$ with $r(0)=\Gamma$.
Since $\neg \phi \in \Gamma$, by the Truth Lemma, $(\system, r, 0) \not\models \phi$. Therefore, $\not\models_\CS \phi$.
\end{proof}

\section{Internalization}\label{sect:internalization}

It is desirable that a justification logic internalizes its own notion of proof. This is formalized in the following definition.

\begin{definition}
A justification logic $\mathsf{L}$ satisfies \emph{internalization} if for each formula $\phi$ with
$
\mathsf{L} \vdash \phi
$
and for each agent $\agent$, there exists a term $t$ with
$
\mathsf{L} \vdash \jbox{t}_\agent \phi 
$.
\end{definition}

Usually, internalization is shown by induction on the derivation of $\phi$. However, for $\LPLTL_\CS$ this seems not possible because it includes rules $\nextnecrule$ and $\alwaysnecrule$.
In this section, we introduce an extension $\LL_\CS$ of $\LPLTL_\CS$ that satisfies internalization.

The language of $\LL_\CS$ includes a new unary operator $\star$ on justification terms. We define
\[
\star^0 c  \colonequals c \qquad\text{and}\qquad \star^n c  \colonequals \star \star^{n-1}  c \quad\text{(for $n\geq 1$)}\,.
\]
The set of terms $\Terms^\star$ of  $\LL_\CS$ is  given by
\[
  t \coloncolonequals \star^n c \mid x \mid \; \tinspect t \mid \;  t \tsum t \mid t \tapp t \, ,
 \]
 where $c \in \CTerms$, $n \geq 0$, and $x \in \VTerms$. 
The set of formulas $\Formulae^\star$ of  $\LL_\CS$ is defined like $\Formulae$ but using $\Terms^\star$ instead of $\Terms$.

The axioms of $\LL_\CS$ are:
\begin{enumerate}
\item all axioms of $\LPLTL$
\item $\Box \phi \to \lnext \phi  \hfill \mixprinciple$
\item $\Box (\jbox{t}_{i} \phi \to \phi) \hfill \boxrefax$ 
\end{enumerate}
The rules of $\LL_\CS$ are:
\[
  \lrule{\phi \quad \phi \limplies \psi}{\psi}\,\mprule 
\qquad
\text{and}
 \qquad
\lrule{\jbox{c}_{i_0} \phi \in \CS}{\jbox{\star^n c}_{i_n}\Box \jbox{\star^{n-1} c}_{i_{n-1}}\Box\ldots \jbox{\star c}_{i_1} \Box \jbox{c}_{i_0} \phi}\,  \constnecrule^\star \,,
\]
where $n \geq 0$; so  $\constnecrule^\star$ subsumes $\constnecrule$.
Note that a constant specification for $\LL$ may include formulas of the form 
$
\jbox{c}_i (\Box \phi \to \lnext \phi)$
and 
$
\jbox{c}_i \Box (\jbox{t}_{i} \phi \to \phi).
$

\begin{remark}
The principles $\mixprinciple$ and $\boxrefax$ are derivable in $\LPLTL_\CS$.
However, their proofs require applications of the rules $\nextnecrule$ and $\alwaysnecrule$, respectively.
Since these rules are not included in $\LL_\CS$, we have to include $\mixprinciple$ and $\boxrefax$ as axioms.
\end{remark}

\begin{remark}
The $\star$-operation is very powerful. Its meaning can be explained as follows.
If $\jbox{c}_i \phi$ is contained in $\CS$, then $\jbox{c}_i \phi$ is provable and hence $\Box \jbox{c}_i \phi$ is provable, too (see Lemma~\ref{l:nexDeriv:1}). The evidence $\star c$ justifies this fact, i.e., $\jbox{\star c}_i \Box \jbox{c}_i \phi$ is provable.
Looking closely at $\constnecrule^\star$ we see that we get even more. Indeed, for any agent~$j$ we have that
$\jbox{\star c}_j \Box \jbox{c}_i \phi$ is provable. Moreover, even arbitrary iterations of this principle are provable, which implies that the constant specification is common knowledge among the agents, so to speak.

We could use a less general version of $\constnecrule^\star$ where the $\star$-operation is indexed. This would be similar to the evidence verification operation of~\cite{TYav08TOCS}, see also Question~\ref{q:yav}.  In that case we would obtain $\jbox{\star^j_i c}_j \Box \jbox{c}_i \phi$.
However, for the purpose of internalization we do not need these indices and hence we dispense with them.
\end{remark}

\begin{definition}
A constant specification $\CS$ is \emph{axiomatically appropriate} if for each axiom~$\phi$ of $\LL$ and each agent $i$, there is a constant $c$ with $\jbox{c}_i \phi \in \CS$.
\end{definition}


First we show that $\LL_\CS$ extends $\LPLTL_\CS$.

\begin{lemma}\label{l:nexDeriv:1}
Let $\CS$ be an axiomatically appropriate constant specification for $\LL$. The rules $\alwaysnecrule$ and $\nextnecrule$ are derivable in $\LL_\CS$.
\end{lemma}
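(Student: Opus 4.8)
The plan is to prove both rules by induction on derivations in $\LL_\CS$, and to reduce $\nextnecrule$ to $\alwaysnecrule$ at the very end. Concretely, I would first establish the statement for $\alwaysnecrule$: whenever $\vdash_\CS\phi$, then $\vdash_\CS\lalways\phi$. Once this is in place, $\nextnecrule$ follows immediately, since from $\vdash_\CS\phi$ we obtain $\vdash_\CS\lalways\phi$ and then apply the axiom $\mixprinciple$, namely $\lalways\phi\to\lnext\phi$, together with $\mprule$ to conclude $\vdash_\CS\lnext\phi$.

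The first preparatory step is to show that every axiom $\phi$ of $\LL$ satisfies $\vdash_\CS\lalways\phi$. Since $\CS$ is axiomatically appropriate, there is a constant $c$ with $\jbox{c}_i\phi\in\CS$. Applying $\constnecrule^\star$ with parameter $n=1$ gives $\vdash_\CS\jbox{\star c}_i\Box\jbox{c}_i\phi$, and $\refax$ with $\mprule$ then yields $\vdash_\CS\Box\jbox{c}_i\phi$. On the other hand, $\boxrefax$ supplies the axiom instance $\Box(\jbox{c}_i\phi\to\phi)$, so $\alwayskax$ and $\mprule$ give $\vdash_\CS\Box\jbox{c}_i\phi\to\Box\phi$; a final $\mprule$ delivers $\vdash_\CS\lalways\phi$.

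With this in hand I would run the induction on the derivation of $\phi$, distinguishing the three ways a line of an $\LL_\CS$-derivation can arise. If $\phi$ is an axiom, the preparatory step applies directly. If $\phi$ follows from $\psi$ and $\psi\to\phi$ by $\mprule$, the induction hypothesis gives $\vdash_\CS\lalways\psi$ and $\vdash_\CS\lalways(\psi\to\phi)$, and $\alwayskax$ with two applications of $\mprule$ yields $\vdash_\CS\lalways\phi$. The remaining case is the crucial one: $\phi$ is the conclusion of $\constnecrule^\star$ with some parameter $n$, say
\[
\phi = \jbox{\star^n c}_{i_n}\Box\jbox{\star^{n-1}c}_{i_{n-1}}\Box\cdots\jbox{\star c}_{i_1}\Box\jbox{c}_{i_0}\psi, \qquad \jbox{c}_{i_0}\psi\in\CS.
\]
Here I would exploit the design of $\constnecrule^\star$: applying the same rule with parameter $n+1$, choosing any additional agent $i_{n+1}$, produces exactly $\jbox{\star^{n+1}c}_{i_{n+1}}\Box\phi$, because its inner $n$ layers coincide verbatim with $\phi$. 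Hence $\refax$ and $\mprule$ give $\vdash_\CS\Box\phi$, i.e.\ $\vdash_\CS\lalways\phi$.

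I expect the $\constnecrule^\star$ case to be the main obstacle, since at first sight there is no way to prefix $\Box$ to a $\constnecrule^\star$-output without already having a necessitation rule. The observation that removes the difficulty is that the $\star$-operation is tailored precisely so that iterating $\constnecrule^\star$ one level deeper reproduces the previous conclusion under a fresh $\jbox{\star^{n+1}c}\Box$ prefix; $\refax$ then strips the outer justification and leaves the desired $\Box\phi$. Everything else is routine propositional and modal bookkeeping with $\alwayskax$, $\boxrefax$, $\refax$, and $\mixprinciple$.
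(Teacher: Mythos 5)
Your proposal is correct and follows essentially the same route as the paper's proof: induction on the derivation, handling the axiom case via $\constnecrule^\star$ with $n=1$ followed by $\refax$, $\boxrefax$, and $\alwayskax$; the modus ponens case via $\alwayskax$; the $\constnecrule^\star$ case by reapplying the rule at depth $n+1$ and stripping the outer justification with $\refax$; and finally reducing $\nextnecrule$ to $\alwaysnecrule$ via $\mixprinciple$. No gaps.
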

\begin{proof}
We first show that $\alwaysnecrule$ is derivable in $\LL_\CS$. Suppose $\phi$ is provable in $\LL_\CS$. By induction on the proof of $\phi$, we show that $\Box\phi$ is provable in $\LL_\CS$. 

In case $\phi$ is an axiom, since $\CS$ is axiomatically appropriate, there is a constant $c$ such that $\jbox{c}_\agent \phi \in \CS$. Using $\constnecrule^\star$, we get $\jbox{\star c}_\agent \Box  \jbox{c}_\agent \phi$, and then using axiom $\refax$ we get $\Box  \jbox{c}_\agent \phi$. Finally, using axioms $\boxrefax$ and $(\Box\kax)$ we obtain $\Box \phi$.
 
In case $\phi$ is derived by modus ponens, the claim is immediate by $(\Box\kax)$.

In case $\phi$ is $\jbox{\star^n c}_{i_n}\Box \jbox{\star^{n-1} c}_{i_{n-1}}\Box\ldots \jbox{\star c}_{i_1} \Box \jbox{c}_{i_0} \phi$ derived using $\constnecrule^\star$, we can use $\constnecrule^\star$ also to obtain 
\[\jbox{\star^{n+1} c}_{i_{n+1}}\Box \jbox{\star^n c}_{i_n}\Box \jbox{\star^{n-1} c}_{i_{n-1}}\Box\ldots \jbox{\star c}_{i_1} \Box  \jbox{c}_{i_0} \phi.\] Then using $\refax$ we get 
\[\Box \jbox{\star^n c}_{i_n}\Box \jbox{\star^{n-1} c}_{i_{n-1}}\Box\ldots \jbox{\star c}_{i_1} \Box \jbox{c}_{i_0} \phi,\] that is $\Box \phi$.

Derivability of $\nextnecrule$ follows from $\alwaysnecrule$ and axiom $\mixprinciple$.
\end{proof}

Let $\CS$ be a constant specification for $\LL$. We set
\[
\CS^r \colonequals \{ \jbox{c}_i \phi \ |\  \jbox{c}_i \phi \in \CS \text{ and $\phi$ is an axiom of $\LPLTL$} \}\,.
\]
Obviously, $\CS^r$ is a constant specification for $\LPLTL$. We get the following corollary.

\begin{corollary}
Let $\CS$ be an axiomatically appropriate constant specification for $\LL$.
For each formula $\phi$ of  $\Formulae$,
\[
\LPLTL_{\CS^r} \vdash \phi \quad\text{implies} \quad \LL_\CS \vdash \phi.
\] 
\end{corollary}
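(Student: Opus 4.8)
The plan is to argue by induction on the derivation of $\phi$ in $\LPLTL_{\CS^r}$, showing that each line of that derivation is also provable in $\LL_\CS$. Since $\phi$ lies in $\Formulae$ and $\LPLTL_{\CS^r}$ manipulates only formulas of $\Formulae$, every intermediate line is again a formula of $\Formulae$, so the induction hypothesis applies to all earlier lines, and no $\star$-terms are needed to state the lines (though the witnessing $\LL_\CS$-derivations may of course introduce them internally).

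First I would dispose of the axioms. By the first clause in the definition of $\LL_\CS$, every axiom of $\LPLTL$ is also an axiom of $\LL_\CS$, so any axiomatic line is immediately provable in $\LL_\CS$. For an application of $\constnecrule$ producing $\jbox{c}_\agent \psi$ from $\jbox{c}_\agent \psi \in \CS^r$, I would use that $\CS^r \subseteq \CS$ by definition, so $\jbox{c}_\agent \psi \in \CS$; instantiating $\constnecrule^\star$ with $n = 0$ (which, as noted above, subsumes $\constnecrule$) then yields $\LL_\CS \vdash \jbox{c}_\agent \psi$.

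Next I would handle the rules. An application of $\mprule$ is copied verbatim, since $\mprule$ is a rule of $\LL_\CS$: if the premises are provable in $\LL_\CS$ by the induction hypothesis, so is the conclusion. The two temporal necessitation rules $\nextnecrule$ and $\alwaysnecrule$ are the only steps of $\LPLTL_{\CS^r}$ not directly available in $\LL_\CS$; here I would invoke Lemma~\ref{l:nexDeriv:1}, which states that both rules are derivable in $\LL_\CS$ (this is exactly where axiomatic appropriateness of $\CS$, the hypothesis of the corollary, enters). Concretely, if a line is obtained by $\nextnecrule$ (resp.\ $\alwaysnecrule$) from an earlier line $\psi$, then the induction hypothesis gives $\LL_\CS \vdash \psi$, and Lemma~\ref{l:nexDeriv:1} upgrades this to $\LL_\CS \vdash \lnext \psi$ (resp.\ $\LL_\CS \vdash \lalways \psi$).

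The main work has in fact already been carried out in Lemma~\ref{l:nexDeriv:1}: once the temporal necessitation rules are known to be derivable in $\LL_\CS$, the corollary is a routine step-by-step simulation of the $\LPLTL_{\CS^r}$-derivation, and there is no genuine obstacle remaining. Taking the last line of the derivation to be $\phi$ completes the induction and gives $\LL_\CS \vdash \phi$.
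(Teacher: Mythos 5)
Your proof is correct and matches the paper's intended argument: the paper states this corollary as an immediate consequence of Lemma~\ref{l:nexDeriv:1}, and your step-by-step simulation (axioms of $\LPLTL$ being axioms of $\LL_\CS$, $\CS^r \subseteq \CS$ with $\constnecrule^\star$ at $n=0$ covering $\constnecrule$, modus ponens copied verbatim, and the two necessitation rules handled by Lemma~\ref{l:nexDeriv:1}) is exactly the routine induction the paper leaves implicit. Nothing is missing.
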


We will now establish the internalization property. We need the following lemma.

\begin{lemma}\label{l:weakInt:1}
Let $\CS$ be an axiomatically appropriate constant specification.
For each formula~$\phi$ and each  $\agent$, 
\[
\LL_\emptyset \vdash \phi \quad\text{implies}\quad \LL_\CS \vdash  \jbox{t}_\agent \phi \text{ for some term $t$}. 
\]
\end{lemma}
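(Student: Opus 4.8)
The plan is to argue by induction on the derivation of $\phi$ in $\LL_\emptyset$, building the justifying term $t$ alongside the induction. The crucial preliminary observation is that when the constant specification is empty, the rule $\constnecrule^\star$ can never be applied: its premise requires $\jbox{c}_{i_0}\psi \in \CS = \emptyset$, which never holds. Recall also that $\LL$ has no necessitation rules at all; this is exactly the point of replacing the temporal necessitation rules of $\LPLTL$ by the axioms $\mixprinciple$ and $\boxrefax$ together with the powerful $\constnecrule^\star$. Consequently every line of an $\LL_\emptyset$-derivation is either an axiom of $\LL$ or is obtained from two earlier lines by $\mprule$, and the induction splits into just these two cases.

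For the base case, suppose $\phi$ is an axiom of $\LL$. Since $\CS$ is axiomatically appropriate, for the given agent $\agent$ there is a constant $c$ with $\jbox{c}_\agent \phi \in \CS$. Applying $\constnecrule^\star$ with $n = 0$ (in which case $\star^0 c = c$ and the rule coincides with the ordinary $\constnecrule$) yields $\LL_\CS \vdash \jbox{c}_\agent \phi$, so the term $t := c$ witnesses the claim. Note that axiomatic appropriateness is stated for every axiom of $\LL$, so this covers the new axioms $\mixprinciple$ and $\boxrefax$ as well as the inherited $\LPLTL$-axioms.

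For the inductive step, suppose $\phi$ is obtained from $\psi$ and $\psi \limplies \phi$ by $\mprule$. By the induction hypothesis there are terms $s$ and $u$ with $\LL_\CS \vdash \jbox{s}_\agent(\psi \limplies \phi)$ and $\LL_\CS \vdash \jbox{u}_\agent \psi$. The application axiom $\appax$ gives $\LL_\CS \vdash \jbox{s}_\agent(\psi \limplies \phi) \limplies (\jbox{u}_\agent \psi \limplies \jbox{s \tapp u}_\agent \phi)$, and two applications of $\mprule$ then produce $\LL_\CS \vdash \jbox{s \tapp u}_\agent \phi$; thus $t := s \tapp u$ works.

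This is essentially the standard internalization argument for the Logic of Proofs, and once the preliminary observation is in place it presents no real difficulty. The only point deserving attention — and the reason the lemma is phrased with $\LL_\emptyset$ on the left rather than $\LL_\CS$ — is precisely that emptiness of the constant specification makes $\constnecrule^\star$ vacuous, leaving $\mprule$ as the sole active rule. Were $\constnecrule^\star$ allowed to fire, one would additionally have to internalize its conclusions $\jbox{\star^n c}_{i_n}\Box \ldots \Box \jbox{c}_{i_0}\phi$, which would break the clean two-case induction; restricting to $\LL_\emptyset$ sidesteps this, and the general internalization statement is then expected to follow by combining this lemma with the derivability results established above.
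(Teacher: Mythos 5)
Your proof is correct and follows essentially the same route as the paper: induction on the $\LL_\emptyset$-derivation with just the axiom case (handled via axiomatic appropriateness and constant necessitation) and the modus ponens case (handled via \appax). The only difference is that you make explicit the observation—left implicit in the paper—that $\constnecrule^\star$ cannot fire when the constant specification is empty, which is a worthwhile clarification but not a different argument.
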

\begin{proof}
We proceed by induction on the derivation of $\phi$.

In case $\phi$ is an axiom, since $\CS$ is axiomatically appropriate, there is a constant $c$ with
\[
\LL_\CS \vdash \jbox{c}_\agent \phi.
\]

In case $\phi$ is derived by modus ponens from $\psi \limplies \phi$ and $\psi$, then, by the induction hypothesis, there are term $s_1$ and $s_2$ such that $\jbox{s_1}_\agent (\psi \limplies \phi)$ and $\jbox{s_2}_\agent \psi$ are provable.
Using $\appax$ and modus ponens, we obtain $\jbox{s_1 \tapp s_2}_\agent \phi$.
\end{proof}

\begin{theorem}
Let $\CS$ be an axiomatically appropriate constant specification.
$\LL_\CS$ enjoys internalization.
\end{theorem}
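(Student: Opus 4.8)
The plan is to prove internalization by induction on the derivation of $\phi$ in $\LL_\CS$, upgrading the argument of Lemma~\ref{l:weakInt:1} so that it also covers the case in which $\phi$ is introduced by the rule $\constnecrule^\star$. The two cases already present in Lemma~\ref{l:weakInt:1} carry over verbatim: if $\phi$ is an axiom, axiomatic appropriateness yields a constant $c$ with $\jbox{c}_\agent \phi \in \CS$, hence $\LL_\CS \vdash \jbox{c}_\agent \phi$; and if $\phi$ is obtained by modus ponens from $\psi \limplies \phi$ and $\psi$, I would combine the witnesses $s_1,s_2$ supplied by the induction hypothesis via $\appax$ to get $\LL_\CS \vdash \jbox{s_1 \tapp s_2}_\agent \phi$.

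The one genuinely new case---and the main obstacle---is when $\phi$ has the form $\jbox{\star^n c}_{i_n}\Box \ldots \jbox{\star c}_{i_1}\Box\jbox{c}_{i_0}\psi$, obtained from $\jbox{c}_{i_0}\psi \in \CS$ by $\constnecrule^\star$. Here the induction hypothesis offers no direct help, since this rule is not modus ponens. The key observation is that $\constnecrule^\star$ may be applied once more, prepending a further layer with the desired agent $\agent$ as the outermost index; this yields
\[
\LL_\CS \vdash \jbox{\star^{n+1} c}_\agent \Box\, \phi .
\]
It then remains to strip the outer $\Box$ while keeping the justification intact. To this end I would first note that $\LL_\emptyset \vdash \Box\phi \to \phi$: this follows from the derivation in Lemma~\ref{l:mix:1}, which uses only $\utwoax$, $\funax$, propositional reasoning, and $\mprule$, all available in $\LL_\emptyset$. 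Applying Lemma~\ref{l:weakInt:1} to the theorem $\Box\phi \to \phi$ produces a term $s$ with $\LL_\CS \vdash \jbox{s}_\agent (\Box\phi \to \phi)$. Finally, $\appax$ together with two applications of modus ponens gives $\LL_\CS \vdash \jbox{s \tapp \star^{n+1} c}_\agent \phi$, so $t \colonequals s \tapp \star^{n+1} c$ is the required witness.

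The crux of the argument is thus this self-referential use of $\constnecrule^\star$: the $\star$-operation is engineered precisely so that every instance of the constant-specification rule can itself be justified by another instance of the same rule. This is exactly why $\LL_\CS$ internalizes whereas $\LPLTL_\CS$ does not, since in the latter the analogous step would require the non-internalizable rules $\nextnecrule$ or $\alwaysnecrule$.
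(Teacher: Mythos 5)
Your proof is correct and follows essentially the same route as the paper's: the same induction, the same reuse of Lemma~\ref{l:weakInt:1} for the axiom and modus ponens cases, and the same key move in the $\constnecrule^\star$ case---reapplying the rule to obtain $\jbox{\star^{n+1}c}_\agent \Box\phi$, extracting $\Box\phi\to\phi$ from Lemma~\ref{l:mix:1} (valid in $\LL_\emptyset$ since only \mprule{} is used), internalizing it via Lemma~\ref{l:weakInt:1}, and combining with \appax. Your closing remark about the self-referential design of $\star$ matches the paper's own explanation of why $\LL_\CS$ internalizes while $\LPLTL_\CS$ does not.
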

\begin{proof}
We have to show that for each formula $\phi$ and each $\agent$
\[
\LL_\CS \vdash \phi
\quad\text{implies}\quad
\LL_\CS \vdash \jbox{t}_\agent \phi \text{ for some term $t$}.
\]
We proceed by induction on the derivation of $\phi$.

The cases where $\phi$ is an axiom or $\phi$ is derived by modus ponens are like the corresponding cases in the previous lemma.

In case $\phi$ is $\jbox{\star^n c}_{i_n}\Box \ldots \jbox{\star c}_{i_1} \Box \jbox{c}_{i_0} \psi$ derived using $\constnecrule^\star$, we can use $\constnecrule^\star$ also to obtain 
$\jbox{\star^{n+1} c}_{\agent}\Box \phi$. 
By Lemma~\ref{l:mix:1} we find $\LL_\emptyset \vdash \Box \phi \to \phi$.
Hence by Lemma~\ref{l:weakInt:1} there is a term $t$ such that
$\LL_\CS \vdash \jbox{t}_{\agent} (\Box \phi \to \phi)$.
We finally conclude
\[
\LL_\CS \vdash \jbox{t \tapp \star^{n+1} c }_{\agent} \phi \,.
\qedhere
\]
\end{proof}

It is straightforward to adapt our semantics for $\LPLTL_\CS$ to the extended language of $\LL_\CS$. Soundness and completeness of $\LL_\CS$ can then be shown similar to the case of $\LPLTL_\CS$. However, for the completeness proof of $\LL_\CS$ we require $\CS$ to be axiomatically appropriate in order to have the necessitation rules available.

\begin{definition}
	Let $\CS$ be a constant specification for $\LL$. A $\CS$-evidence function for agent $i$ for $\LL$ is a function 
	$
	\evidence_{i}: S \times \Terms^\star \to \powerset(\Formulae^\star)
	$
	satisfying conditions 1--5 of Definition \ref{def:evidence function for LPLTL} and the following additional condition:
	
	\begin{itemize}
		\item if $\jbox{c}_{i_0} \phi \in \CS$, 
		then 
		for all $w \in S$, all $n \geq 1$, and all agents $i_{n-1}, \ldots, i_1$:
		$$\Box \jbox{\star^{n-1} c}_{i_{n-1}} \ldots \Box\jbox{\star c}_{i_1} \Box \jbox{c}_{i_0} \phi \in \evidence_{i} (w,\star^n c).$$
	\end{itemize}
An $\LL_\CS$-interpreted system is an interpreted system where we use evidence functions for $\LL$.
We write $\models_\CS^\star \phi$ to mean $\system\entails \phi$ for all $\LL_\CS$-interpreted systems $\system$.
\end{definition}

\begin{theorem}[Soundness and completeness]\label{thm:completeness LPLTL^*}
	Let $\CS$ be an axiomatically appropriate constant specification for $\LL$. For each formula $\phi$,
	\[
	\models_\CS^\star \phi  \quad\text{if{f}}\quad  \LL_\CS \vdash \phi.
	\]
\end{theorem}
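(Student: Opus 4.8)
The plan is to prove the soundness and completeness of $\LL_\CS$ by adapting the two halves of the $\LPLTL_\CS$ argument separately, treating the new axioms $\mixprinciple$ and $\boxrefax$ and the new rule $\constnecrule^\star$ as the only genuinely new work. For soundness I would proceed by induction on the derivation, exactly as for $\LPLTL_\CS$. The cases for all $\LPLTL$ axioms and $\mprule$ are unchanged, so the three points to check are: $\mixprinciple$, which is immediate from the semantics of $\Box$ (if $\phi$ holds at all future points $r(n+k)$, it holds in particular at $r(n+1)$, which is $\lnext \phi$); $\boxrefax$, which follows from $\refax$-soundness together with the same $\Box$-semantics, since $\refax$ is sound at every state; and the rule $\constnecrule^\star$, whose soundness is exactly what the extra clause in the $\LL$-evidence function was designed to guarantee. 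The hard part of soundness is bookkeeping with the nested modalities in $\constnecrule^\star$: I would show by induction on $n$ that the formula $\jbox{\star^n c}_{i_n}\Box \ldots \jbox{c}_{i_0}\phi$ is valid, where the evidence clause supplies membership of the relevant formula in $\evidence_{i_n}(w,\star^n c)$, and the validity of the boxed body at all $R_{i_n}$-accessible points follows from the inner induction hypothesis together with the semantics of $\Box$.

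For completeness I would follow the canonical model construction of Section~\ref{sect:Completeness} verbatim, working with $\MCS_\chi$ over $\Sub^+(\chi)$ for the extended language $\Formulae^\star$. The relations $R_\lnext$ and $R_i$, the notions of $\phi\luntil\psi$-sequence and acceptable sequence, Lemmas~\ref{lem:R-next}--\ref{lem:finite seq to acceptable seq}, and Corollary~\ref{cor:acceptable sequence starts with Gamma} all go through unchanged, since their proofs use only the temporal axioms and propositional reasoning, all of which remain available (and the necessitation rules they implicitly rely on are derivable in $\LL_\CS$ by Lemma~\ref{l:nexDeriv:1}, given that $\CS$ is axiomatically appropriate). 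The canonical evidence function is defined as before by $\evidence_i(\Gamma,t) := \{\phi \mid \Gamma \vdash \jbox{t}_i \phi\}$, now ranging over $\Terms^\star$. I would then verify that this canonical structure is an $\LL_\CS$-interpreted system, which means checking conditions~1--5 of Definition~\ref{def:evidence function for LPLTL} (as in the existing lemma) plus the new $\star$-clause.

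The main obstacle is precisely this new $\star$-clause for the canonical evidence function: I must show that whenever $\jbox{c}_{i_0}\phi \in \CS$, the formula $\Box \jbox{\star^{n-1} c}_{i_{n-1}} \ldots \Box \jbox{c}_{i_0}\phi$ lies in $\evidence_i(\Gamma,\star^n c)$, i.e.\ that $\Gamma \vdash \jbox{\star^n c}_i \Box\jbox{\star^{n-1}c}_{i_{n-1}}\ldots\Box\jbox{c}_{i_0}\phi$. This is exactly the conclusion of $\constnecrule^\star$ (with the outermost agent index taken to be $i$), so it is provable outright and hence a member of every $\Gamma$-theory; the verification reduces to observing that $\constnecrule^\star$ yields this derivation directly. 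The Truth Lemma then carries over: the $\jbox{t}_i\phi$, $\lnext\phi$, and $\luntil$ cases are argued exactly as before, the $R_i$-definition still forcing $r'(n') \vdash \phi$ at accessible points via $\posintax$. Finally, completeness follows as in Theorem~\ref{thm:Completeness-interpreted systems}: from $\not\vdash_\CS \phi$ I extract $\Gamma \in \MCS_\phi$ with $\lnot\phi \in \Gamma$, extend to an acceptable sequence (Corollary~\ref{cor:acceptable sequence starts with Gamma}), build a run $r$ with $r(0)=\Gamma$, and invoke the Truth Lemma to get $(\system,r,0)\not\models\phi$, so $\not\models_\CS^\star \phi$. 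Combining with the soundness direction gives the biconditional.
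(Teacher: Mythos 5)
Your proposal is correct and follows exactly the route the paper intends: the paper states this theorem without a detailed proof, remarking only that soundness and completeness ``can be shown similar to the case of $\LPLTL_\CS$'' with $\CS$ axiomatically appropriate so that the necessitation rules are available (Lemma~\ref{l:nexDeriv:1}), and your elaboration---soundness via the new evidence clause for $\constnecrule^\star$ and the axioms $\mixprinciple$, $\boxrefax$, completeness via the same finite canonical construction with the $\star$-clause verified directly from $\constnecrule^\star$---fills in precisely those details. (One negligible slip: in the Truth Lemma the accessible-point derivability $r'(n')\vdash\phi$ comes straight from the definition of $R_i$, not from $\posintax$, which is only needed for monotonicity of the canonical evidence function.)
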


We conclude this section by showing the conservativity of $\LL$ over $\LPLTL$. First we need a lemma.

\begin{lemma}\label{lem:conservativity}
Let $\CS$ be a constant specification for $\LPLTL$, and $\system$ be an interpreted system of $\LPLTL$ for $\CS$. Then we can extend $\system$ to an $\LL_\CS$-interpreted system $\system^\star$ such that for every run $r$, every $n \in \mathbb{N}$, and every formula $\phi \in \Formulae$:
\[
(\system,r,n) \models \phi \quad \Longleftrightarrow \quad (\system^\star,r,n) \models \phi.
\]
\end{lemma}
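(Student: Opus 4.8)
The plan is to leave the temporal–epistemic skeleton of $\system$ untouched and only enlarge the evidence functions. Concretely, $\system^\star$ will have the same set of runs $\runs$, the same states $S$, the same accessibility relations $R_1,\ldots,R_\numberofagents$, and the same valuation $\valuation$ as $\system$; the only change is that each $\evidence_i$, originally a map $S\times\Terms\to\powerset(\Formulae)$, is replaced by a map $\evidence_i^\star : S\times\Terms^\star\to\powerset(\Formulae^\star)$. I would demand two properties of $\evidence_i^\star$: first, that it agrees with $\evidence_i$ on old terms, i.e.\ $\evidence_i^\star(w,t)=\evidence_i(w,t)$ for every $t\in\Terms$ and every $w\in S$; and second, that it satisfies all conditions of Definition~\ref{def:evidence function for LPLTL} together with the additional $\star$-condition, so that $\system^\star$ is a genuine $\LL_\CS$-interpreted system. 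The first property is exactly what will make the truth of old formulas survive the extension.

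To define $\evidence_i^\star$ I would, at each state $w$ separately, take the smallest family of sets closed under application, sum and positive introspection and containing an appropriate seed. The seed at an old term $t\in\Terms$ is the whole set $\evidence_i(w,t)$; the seed at a term $\star^n c$ with $n\geq 1$ is the set of formulas $\Box\jbox{\star^{n-1}c}_{i_{n-1}}\cdots\Box\jbox{c}_{i_0}\phi$ ranging over all $\jbox{c}_{i_0}\phi\in\CS$ and all intermediate agents $i_{n-1},\ldots,i_1$ demanded by the $\star$-condition; the seed at every other new term is empty. Formally this is an increasing union $\evidence_i^\star(w,\cdot)=\bigcup_k E_k(w,\cdot)$, where $E_0$ is the seed and $E_{k+1}$ adds exactly the formulas forced by one application of a closure rule. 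Conditions 3--5 then hold because the result is closed under the corresponding rules, condition 2 is inherited from $\evidence_i$ on the old constants, and the $\star$-condition holds because its formulas are already placed in the seed.

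The heart of the argument is the verification of the two demanded properties. For agreement on old terms I would show by induction on $k$ that $E_k(w,t)=\evidence_i(w,t)$ for every $t\in\Terms$. The key observation is that each closure rule, applied to at least one new term, produces a value at a \emph{new} term: the terms $t_1\tapp t_2$, $t_1\tsum t_2$ and $\tinspect t_1$ all contain a $\star$ as soon as one of their immediate subterms does. Hence the value of $E_{k+1}$ at an old term is computed only from values of $E_k$ at old subterms, which by induction equal those of $\evidence_i$; since $\evidence_i$ already satisfies the closure conditions, no new formula is ever added at an old term. For monotonicity I would again argue by induction on $k$ that $R_i(v,w)$ implies $E_k(v,t)\subseteq E_k(w,t)$ for all $t$: the seed is monotone (on old terms by monotonicity of $\evidence_i$, and on the terms $\star^n c$ because the seed there does not depend on the state), and each closure rule preserves inclusions between the two stages.

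Finally I would prove truth preservation for $\phi\in\Formulae$ by induction on the structure of $\phi$. Since $\runs$, $S$, the $R_i$ and $\valuation$ are shared, the propositional, next- and until-cases are immediate from the induction hypothesis, and the only case that really uses the construction is $\phi=\jbox{t}_i\psi$ with $t\in\Terms$ and $\psi\in\Formulae$. Here the evidence clause matches because $\evidence_i^\star(r(n),t)=\evidence_i(r(n),t)$, and the accessibility clause matches because $R_i$ is unchanged and, by the induction hypothesis, $\psi$ has the same truth value at every $(r',n')$ in both systems. The main obstacle I anticipate is precisely the agreement-on-old-terms step: one must be certain that closing the freshly seeded $\star$-evidence under application, sum and positive introspection cannot feed any formula back into the evidence of an old term, and it is the ``closure on a new term stays new'' observation that rules this out.
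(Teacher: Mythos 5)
Your proposal is correct and follows essentially the same route as the paper: the paper's proof simply invokes "a least fixed point construction" extending the evidence functions $\evidence_i$ to $\evidence_i^\star$ while leaving $\runs$, $S$, the $R_i$ and $\valuation$ untouched, and your seed-plus-closure construction is exactly that construction spelled out in detail. In fact your write-up supplies the two verifications the paper leaves implicit (that closure never feeds formulas back into old terms, and that monotonicity is preserved stage by stage), so it is a faithful and more explicit version of the paper's argument.
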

\begin{proof}
Let $\system = (\runs,S,R_1,\ldots,R_h,\evidence_1,\ldots,\evidence_h,\nu)$ be an arbitrary interpreted system of $\LPLTL$ for $\CS$. 
By a least fixed point construction, we can easily extend the $\CS$-evidence functions~$\evidence_i$, for $1 \leq i \leq h$, to $\CS$-evidence functions $\evidence_i^\star$  such that 
\begin{enumerate}
\item
$\system^\star = (\runs,S,R_1,\ldots,R_h,\evidence_1^\star,\ldots,\evidence_h^\star,\nu)$ is an $\LL_\CS$-interpreted system and
\item
for each formula $\phi \in \Formulae$,  each run $r$ and each $n \in \mathbb{N}$:
\[
(\system,r,n) \models \phi \quad \Longleftrightarrow \quad (\system^\star,r,n) \models \phi\,. \qedhere
\]
\end{enumerate}
\end{proof}

\begin{theorem}[Conservativity]
Let $\CS$ be a constant specification for $\LPLTL$ and $\phi \in \Formulae$ a formula. If $\LL_\CS \vdash \phi$,  then $\vdash_\CS \phi$.
\end{theorem}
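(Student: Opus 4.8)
The plan is to argue semantically, combining the \emph{soundness} direction of the metatheory for $\LL_\CS$, the completeness of $\LPLTL_\CS$ (Theorem~\ref{thm:Completeness-interpreted systems}), and the model-extension Lemma~\ref{lem:conservativity}. First I would observe that, since every axiom of $\LPLTL$ is also an axiom of $\LL$, any constant specification $\CS$ for $\LPLTL$ is in particular a constant specification for $\LL$; hence the system $\LL_\CS$ and the relation $\models_\CS^\star$ are well defined even though $\CS$ contains no constants for the new axioms $\mixprinciple$ and $\boxrefax$. Crucially, I would invoke only that $\LL_\CS \vdash \psi$ implies $\models_\CS^\star \psi$, which, as noted before Theorem~\ref{thm:completeness LPLTL^*}, holds for an arbitrary $\CS$ and does \emph{not} require $\CS$ to be axiomatically appropriate.

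Now suppose $\LL_\CS \vdash \phi$ with $\phi \in \Formulae$. By soundness of $\LL_\CS$ we obtain $\models_\CS^\star \phi$, that is, $\system^\star \models \phi$ for every $\LL_\CS$-interpreted system $\system^\star$. To conclude $\vdash_\CS \phi$, it suffices by completeness of $\LPLTL_\CS$ to establish $\models_\CS \phi$. So I would fix an arbitrary interpreted system $\system$ of $\LPLTL$ for $\CS$ and, using Lemma~\ref{lem:conservativity}, extend it to an $\LL_\CS$-interpreted system $\system^\star$ that agrees with $\system$ on all formulas of $\Formulae$. Since $\system^\star$ is an $\LL_\CS$-interpreted system, the previous step gives $\system^\star \models \phi$; and because $\phi \in \Formulae$, the equivalence $(\system,r,n) \models \phi \Longleftrightarrow (\system^\star,r,n) \models \phi$ supplied by the lemma yields $\system \models \phi$. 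As $\system$ was arbitrary, $\models_\CS \phi$, and completeness of $\LPLTL_\CS$ then delivers $\vdash_\CS \phi$.

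There is no genuinely hard step here, since the real work has been packaged into Lemma~\ref{lem:conservativity} together with the separate soundness and completeness results. The one point demanding care is the bookkeeping about constant specifications: I must make sure the argument runs for a $\CS$ that is merely a constant specification for $\LPLTL$ and is not assumed axiomatically appropriate for $\LL$, which is exactly why I rely on soundness of $\LL_\CS$ rather than on its completeness (the latter needing appropriateness to recover the necessitation rules). A secondary point worth stating explicitly is that the hypothesis $\phi \in \Formulae$ is essential: the agreement in Lemma~\ref{lem:conservativity} is claimed only for $\star$-free formulas, so the argument does not transfer to formulas containing the $\star$-operator.
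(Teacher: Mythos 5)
Your proof is correct and follows essentially the same route as the paper, which runs the identical argument contrapositively: assume $\not\vdash_\CS \phi$, obtain a countermodel from completeness of $\LPLTL_\CS$ (Theorem~\ref{thm:Completeness-interpreted systems}), extend it via Lemma~\ref{lem:conservativity}, and conclude $\LL_\CS \not\vdash \phi$ from the soundness half of Theorem~\ref{thm:completeness LPLTL^*}. Your explicit observation that only soundness of $\LL_\CS$ is invoked---so the argument goes through for a $\CS$ that is not axiomatically appropriate for $\LL$---is in fact slightly more careful than the paper's own proof, which cites Theorem~\ref{thm:completeness LPLTL^*} even though that theorem's stated hypothesis includes axiomatic appropriateness.
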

\begin{proof}
Suppose that $\not\vdash_\CS \phi$. Then, by Theorem \ref{thm:Completeness-interpreted systems}, we have $\not\models_\CS \phi$. Thus there exists an interpreted system $\system$ of $\LPLTL$ for $\CS$ and a state $r(n)$ such that $(\system,r,n) \not\models \phi$. Now, by Lemma \ref{lem:conservativity}, we find an $\LL_\CS$-interpreted system $\system^\star$ such that $(\system^\star,r,n) \not\models \phi$. Therefore, by Theorem \ref{thm:completeness LPLTL^*}, we have $\LL_\CS \not\vdash \phi$ as desired.
\end{proof}

\section{Additional Principles}\label{sect:addidtional}

In $\LPLTL_\CS$, epistemic and temporal properties do not interact. On the other hand in $\LL_\CS$, there are some interactions between time and knowledge, in axiom $\boxrefax$ and rule $\constnecrule^\star$. Here we propose some principles that create a connection between justifications and temporal modalities.
 We assume the language for terms to be augmented in the obvious way.
 \begin{gather*} 
  \jbox{t}_\agent \lalways \phi \limplies \lalways \jbox{\talwaysaccess t}_\agent \phi 
	\tag*{\alwaysaccessprinciple}\\
  \lalways \jbox{t}_\agent \phi \limplies \jbox{\tgeneralize t}_\agent \lalways  \phi
	\tag*{\generalizeprinciple}\\
  \jbox{t}_\agent \lnext \phi \limplies \lnext \jbox{\tnext t}_\agent \phi
	\tag*{\nextrightshiftprinciple}\\
  \lnext \jbox{t}_\agent \phi \limplies \jbox{\tprev t}_\agent \lnext \phi
	\tag*{\nextleftshiftprinciple}
 \end{gather*}
 Some first remarks about these principles:
  \begin{description}
   \item[\alwaysaccessprinciple] This is very plausible, if you have evidence that something always is true, then at every point in time you should be able to access this information. The term operator $\talwaysaccess$ makes the evidence accessible in every future point in time.
   \item[\generalizeprinciple] Using evidence this seems more plausible than just using knowledge, as one requires the evidence to be the same at every point in time. The term operator $\tgeneralize$ converts permanent evidence for a formula to evidence for believing that this formula is always true.
   \item[\nextrightshiftprinciple] This seems plausible: agents do not forget evidence once they have gathered it and can ``take it with them''. The term operator $\tnext$ carries evidence through time.
   \item[\nextleftshiftprinciple] This one seems less plausible as it implies some form of premonition. The term operator $\tprev$ presages future evidence for belief.
  \end{description}

The principle $\generalizeprinciple$ is very strong. In particular, it makes internalization possible even in the presence of necessitation rules.
Indeed, let  $\LPLTL^\mathsf{G}_\CS$ be the system $\LPLTL_\CS$ extended by the axioms $\generalizeprinciple$ and $\mixprinciple$---this is also reflected by  constant specification---and the iterated constant necessitation rule
\[
   \lrule{\jbox{c}_\agent \phi \in \CS}{\jbox{\star^n c}_{i_n}\ldots\jbox{\star c}_{i_1}\jbox{c}_\agent \phi}
 \]
for arbitrary agents $i_1,\ldots,i_n$. Here we employ the same term operator $\star$ as in the rule $\constnecrule^\star$ although the meaning of $\star$ in these two rules is a bit different.

\begin{theorem}[Internalization]
 Let\/ $\CS$ be an axiomatically appropriate constant specification. 
The system $\LPLTL^\mathsf{G}_\CS$ enjoys internalization.
 \end{theorem}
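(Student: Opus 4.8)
The plan is to argue by induction on the derivation of $\phi$, establishing for the fixed agent $\agent$ that some term $t$ satisfies $\LPLTL^\mathsf{G}_\CS \vdash \jbox{t}_\agent \phi$. The three routine cases are as in the earlier internalization arguments. If $\phi$ is an axiom, axiomatic appropriateness of $\CS$ (now taken relative to the axioms of $\LPLTL^\mathsf{G}$) yields a constant $c$ with $\jbox{c}_\agent \phi \in \CS$, so the iterated constant necessitation rule with $n=0$ gives $\vdash \jbox{c}_\agent \phi$. If $\phi$ comes from $\psi \limplies \phi$ and $\psi$ by $\mprule$, the induction hypothesis supplies terms $s_1,s_2$ justifying the two premises and $\appax$ combines them into $\vdash \jbox{s_1 \tapp s_2}_\agent \phi$. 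Finally, if $\phi$ is the conclusion of an instance of the iterated constant necessitation rule, a further application of that rule prepends one more layer whose outermost agent we are free to choose to be $\agent$, giving $\vdash \jbox{\star^{n+1} c}_\agent \phi$.

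The two remaining cases are exactly the temporal necessitation rules that blocked the naive induction for $\LPLTL_\CS$, and they are handled using $\generalizeprinciple$. Suppose $\phi = \lalways \psi$ is obtained from $\psi$ by $\alwaysnecrule$. By the induction hypothesis there is a term $s$ with $\vdash \jbox{s}_\agent \psi$; applying $\alwaysnecrule$ to this derivation gives $\vdash \lalways \jbox{s}_\agent \psi$, and then $\generalizeprinciple$ together with $\mprule$ yields $\vdash \jbox{\tgeneralize s}_\agent \lalways \psi$, i.e.\ $\vdash \jbox{\tgeneralize s}_\agent \phi$. So $t = \tgeneralize s$ works.

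For $\phi = \lnext \psi$ obtained from $\psi$ by $\nextnecrule$, I first run the previous case to get $\vdash \jbox{\tgeneralize s}_\agent \lalways \psi$ from a term $s$ with $\vdash \jbox{s}_\agent \psi$. Now $\mixprinciple$ bridges $\lalways$ to $\lnext$: axiomatic appropriateness provides a constant $c$ with $\jbox{c}_\agent(\lalways \psi \limplies \lnext \psi) \in \CS$, hence $\vdash \jbox{c}_\agent(\lalways \psi \limplies \lnext \psi)$, and $\appax$ with $\mprule$ combines this with $\vdash \jbox{\tgeneralize s}_\agent \lalways \psi$ to produce $\vdash \jbox{c \tapp \tgeneralize s}_\agent \lnext \psi$. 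Thus $t = c \tapp \tgeneralize s$ suffices.

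The main obstacle is precisely the necessitation-rule cases---there is no way to pass from a justification of $\psi$ to one of $\lalways \psi$ or $\lnext \psi$ within $\LPLTL_\CS$ itself, which is why the extra axioms are added. The crux is that $\generalizeprinciple$ converts the uniform justification $\lalways \jbox{s}_\agent \psi$ (obtainable by $\alwaysnecrule$) into a genuine justification $\jbox{\tgeneralize s}_\agent \lalways \psi$ of the always-statement, after which $\mixprinciple$ covers $\lnext$. The one subtlety to confirm is that interpreting axiomatic appropriateness over all axiom schemata of $\LPLTL^\mathsf{G}$---including $\mixprinciple$---gives the constants needed above, and that invoking $\alwaysnecrule$ within the induction is legitimate, which it is since it is a rule of the system.
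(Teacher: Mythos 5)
Your proof is correct and follows essentially the same route as the paper's: induction on the derivation, handling $\alwaysnecrule$ via $\alwaysnecrule$ applied to $\jbox{s}_\agent\psi$ followed by $\generalizeprinciple$, and handling $\nextnecrule$ by combining that result with a constant justifying the $\mixprinciple$ instance $\lalways\psi \limplies \lnext\psi$ through $\appax$. The only difference is that you spell out the routine cases (axiom, $\mprule$, iterated constant necessitation) that the paper leaves implicit, and your treatment of them is also correct.
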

 \begin{proof}
 We proceed by induction on the derivation of $\phi$. There are two interesting cases:
  
 In case $\phi$ is $\lalways \psi$, derived using $\alwaysnecrule$, then, by the induction hypothesis, there is a term~$s$ such that $\jbox{s}_\agent \psi$ is provable.
 Now, we can use $\alwaysnecrule$ in order to obtain $\lalways \jbox{s}_\agent \psi$ and then $\generalizeprinciple$ and modus ponens to get $
  \jbox{\tgeneralize s}_\agent \lalways \psi
$.

In case $\phi$ is $\lnext \psi$, derived using $\nextnecrule$, then, as above, we obtain $\jbox{\tgeneralize s}_\agent \lalways \psi$.
Since $\CS$ is axiomatically appropriate, there is a constant~$c$ with $\jbox{c}_\agent (\Box \psi \to \lnext \psi)$.
Thus we finally conclude
$
  \jbox{c \tapp \tgeneralize s}_\agent \lnext \psi
$. 
 \end{proof}

 It is obvious how to formulate conditions on evidence functions that correspond to the additional principles of this section such that soundness results can be obtained, see~\cite{DBLP:journals/corr/Bucheli15}. However, it is not clear how to show the existence of such models and how to show completeness for these additional principles.  

 \section{Conclusions}\label{s:conclusion}

We introduced the temporal justification logic $\LPLTL_\CS$ and showed that it is sound and complete with respect to interpreted systems that are based on Fitting-models. To achieve this we had to adapt the usual canonical model construction of justification logic such that it yields a finite Fitting-model.
Further, we established that a suitable form of axiom necessitation can replace the necessitation rules for $\Box$ and $\lnext$ and thus make internalization possible.
Finally, we briefly discussed some additional principles that concern the interaction of knowledge, justifications, and time.

We finish this paper with some questions that show possible directions for future work.

\begin{question}
How does a temporal justification logic based on $\mathsf{JT45}$, i.e.~the justification counterpart of the modal logic~$\mathsf{S5}$, look like? 
The problem is that $\mathsf{JT45}$-models must satisfy the strong evidence condition, i.e.~for all $\system, r, n$ and each formula $\jbox{t}_\agent \phi$
\begin{equation}\label{eq:strong:1}
 \phi \in \evidence_\agent(r(n),t) \quad\text{implies}\quad
 (\system, r, n) \entails \jbox{t}_\agent \phi \,,
\end{equation}
see~\cite{Art08RSL,Pac05PLS,Rub06JLC,Stu13JSL}.
In infinite canonical models, the strong evidence property is an easy consequence of the Truth Lemma. In our temporal setting, we have a finite canonical model and the Truth Lemma is restricted to $\Sub^+(\chi)$. Hence it does not entail~\eqref{eq:strong:1} for all formulas~$\jbox{t}_\agent \phi$.
\end{question} 

\begin{question}\label{q:yav}
How can the typical examples, e.g., protocols related to message transmission, be formalized in $\LPLTL_\CS$?

Yavorskaya~\cite{TYav08TOCS} introduces multi-agent justification logics with interaction operations on the justification terms, in particular, she studies two principles:
 \begin{gather*}
\jbox{t}_i \phi \to \jbox{\tinspect_i^j t}_j\jbox{t}_i \phi 
\tag{\textsf{evidence verification}} \\
\jbox{t}_i \phi \to \jbox{\uparrow_i^j t}_j\phi 
\tag{\textsf{evidence conversion}}
 \end{gather*}
 where one agent's evidence is converted into another agent's evidence.
We believe that principles of this kind will be important in the context of this question.
For example, one might consider a temporal justification logic with principles such as
 \begin{gather*}
   \jbox{t}_\agent \phi \limplies \lnext \jbox{\textsf{sent}^\agent_j(t)}_j \phi \quad \text{or}\\
  \jbox{t}_\agent \phi \limplies \leventually \jbox{\textsf{sent}^\agent_j(t)}_j \phi \, .
 \end{gather*}
Here agent $i$ sends evidence $t$ for $\phi$ to agent $j$ and the term $\textsf{sent}^\agent_j(t)$ denotes the evidence that agent $j$ received for believing $\phi$.  
\end{question}

\begin{question}
 What happens if we require operations on justification terms to take time?

We could formalize this idea, e.g.,  by replacing  $\appax$, $\sumax$, and  $\posintax$ with
\begin{gather*}
\jbox{t}_\agent (\phi \limplies \psi) \limplies (\jbox{s}_\agent \phi \limplies \lnext\jbox{t \tapp s}_\agent \psi) \\
\jbox{t}_\agent \phi \vee \jbox{s}_\agent \phi \limplies  \lnext \jbox{t \tsum s}_\agent \phi \\
\jbox{t}_\agent \phi \limplies \lnext \jbox{\tinspect t}_\agent \jbox{t}_\agent \phi\,.
 \end{gather*}
 This might also relate to the logical omniscience problem~\cite{ArtKuz14APAL}.
 \end{question}

 \begin{question}
Can dynamic epistemic justification logics be translated into temporal justification logic  akin to~\cite{vDvdHR13}? 

There are several dynamic justification logics available, e.g., 
\cite{BucKuzStu11WoLLIC,BucKuzStu14,KuzStu13LFCS,Ren12Synthese}, which feature not only traditional public announcements but also specific forms of evidence based updates and evidence elimination.
It would be interesting to see what the relationship between those dynamic logics and temporal justification logic is.
 \end{question}
 
This paper showed a first successful combination of temporal and justification logic. While this initial work shows the feasibility of combining these logics with minimal interaction, the list of questions above shows that various interesting properties may arise from more intricate interactions between justified knowledge and time.



\end{document}